\newcommand\relphantom[1]{\mathrel{\phantom{#1}}}
\begin{document}

\newtheorem{theorem}{Theorem}[section]
\newtheorem{corollary}[theorem]{Corollary}
\newtheorem{definition}[theorem]{Definition}
\newtheorem{proposition}[theorem]{Proposition}
\newtheorem{lemma}[theorem]{Lemma}
\newtheorem{example}[theorem]{Example}
\newenvironment{proof}{\noindent {\bf Proof.}}{\rule{3mm}{3mm}\par\medskip}
\newcommand{\remark}{\medskip\par\noindent {\bf Remark.~~}}

\title{A Class of Five-weight Cyclic Codes and Their Weight Distribution\footnote{This work is supported by the
NSFC under Grants  11071160 and 11001170.}}
\author{Yan Liu\footnote{Corresponding author, Dept. of Math., SJTU, Shanghai, 200240,  liuyan0916@sjtu.edu.cn.}, Haode Yan\footnote{Dept. of Math., Shanghai Jiaotong Univ., Shanghai, 200240, hdyan@sjtu.edu.cn.}
}
\date{}
\maketitle
\thispagestyle{empty}

\abstract{In this paper, a family of five-weight reducible cyclic codes  is presented. Furthermore, the weight distribution of these cyclic codes is determined, which follows from the determination of value distributions of certain exponential sums. }

\noindent {\bf Key words and phrases:} cyclic code, quadratic form,  weight distribution.

\noindent {\bf MSC:} 94B15, 11T71.

\section{\small{INTRODUCTION}}
Recall that an $[n,l,d]$ linear code $\mathcal{C}$ over $\mathbb{F}_{q}$ is a linear subspace of $\mathbb{F}_{q}^{n}$ with dimension $l$ and minimum Hamming distance $d$, where $q$ is a prime power. Let $A_{i}$ denote the number of codewords  in $\mathcal{C}$ with Hamming weight $i$.
The sequence $(A_{0}, A_{1}, A_{2},\ldots, A_{n})$ is called the weight distribution of the code $\mathcal{C}$.
And $\mathcal{C}$ is called cyclic if for any $(c_{0}, c_{1},  \ldots, c_{n-1}) \in \mathcal{C}$, also $(c_{n-1}, c_{0},  \ldots, c_{n-2}) \in \mathcal{C}$.
A linear code $\mathcal{C}$ in $\mathbb{F}_{q}^{n}$ is cyclic if and only if $\mathcal{C}$ is an ideal of the polynomial residue class ring $\mathbb{F}_{q}[x]/(x^{n}-1)$. Since $\mathbb{F}_{q}[x]/(x^{n}-1)$ is a principal ideal ring, every cyclic code corresponds to a principal ideal $(g(x))$ of  the multiples of a polynomial $g(x)$ which is the monic polynomial of lowest degree in the ideal. This polynomial  $g(x)$ is called the generator polynomial, and $h(x)=(x^{n}-1)/g(x)$ is called the parity-check polynomial of the code $\mathcal{C}$. We also  recall that a cyclic code over $\mathbb{F}_{q}$  is called irreducible  if its parity-check polynomial  is irreducible over $\mathbb{F}_{q}$ and reducible, otherwise. Determining the weight distribution of a cyclic code is an important research object in coding theory. Information on the weight distribution of binary cyclic codes can be found in \cite{18,28,29,30}. For information on the weight distribution of non-binary cyclic codes, the reader is referred to \cite{1,2,3,4,6,7,9,12,13,14,15,16,20,21,22}. In this paper, we will determine the weight distribution of a class of five-weight reducible non-binary cyclic codes.

Throughout this paper, let $m$ and $k$ be any two positive integers such that $s=m/d \geq 5$ is odd, where $d=$ gcd$(m,k)$. Let $p$ be  an odd prime, $q=p^{m}$ and $q_{0}=p^{d}$. Then we have $q=q_{0}^{s}$. Let $t$ be a divisor of $d$ such that $d/t$ is odd, and $m_{0}=m/t$.

Let $\pi$ be a primitive element of the finite field $\mathbb{F}_{q}$.
Let $h_{0}(x)$, $h_{1}(x)$ and $h_{2}(x)$ be the minimal polynomials of $\pi^{-2}$, $\pi^{-(p^{k}+1)}$ and $\pi^{-(p^{2k}+1)}$ over $\mathbb{F}_{p^{t}}$, respectively.
 The cyclic codes over  $\mathbb{F}_{p^{t}}$ with  parity-check polynomial $h_{0}(x)h_{1}(x)$ has been extensively studied by \cite{12}. Let $\mathcal{C}$ be the cyclic code with parity-check polynomial $h_{0}(x)h_{1}(x)h_{2}(x)$.  In the case of $t=1$, the weight distribution of $\mathcal{C}$  can be obtained  by the results of  \cite{22,20}. The objective of this paper is to  consider the problem for any positive $t$ such that $d/t$ is odd. For $t\geq2$ and even $k$, the case of $2\nmid \frac{m}{(m,k)}$ but $2\mid \frac{m}{(m,k/2)}$ exists, which was not considered neither in \cite{22} nor in \cite{20}. Moreover, the weight distribution of this class of cyclic codes in this special case can not be obtained with the same arguments as  in \cite{22} and \cite{20} in a straightforward generalization.  In this paper, we always assume $d/t$ is odd and will show that $\mathcal{C}$ has five nonzero weights and determine the weight distribution of this class of cyclic codes.

The rest of this paper is  organized as follows. Some preliminaries   will be introduced in Section 2.  A family of cyclic codes and their weight distributions will be given in Section 3.

\section{\small{PRELIMINARIES}}
We follow the notation in Section 1. In this section, we first give a brief introduction to the theory of quadratic forms over finite fields.

For any quadratic form  $F$ over $\mathbb{F}_{q_{0}}$, there exists a symmetric matrix $A$ of order $s$  such that $F(X)= XAX^{T}$, where $X=(x_{1}, x_{2},\ldots,x_{s})\in \mathbb{F}_{q_{0}}^{s}$ and $X^{T}$ denotes the transpose of $X$. Then there exists a non-singular matrix $M$ of order $s$ over $\mathbb{F}_{q_{0}}$ such that $MAM^{T}$ is
a diagonal matrix (see \cite{11}). Under the non-singular linear substitution $X=ZM$ with $Z=(z_{1}, z_{2},\ldots,z_{s})\in \mathbb{F}_{q_{0}}^{s}$, then
$
F(X)=ZMAM^{T}Z^{T}=\sum_{i=1}^{r}d_{i}z_{i}^{2},$
where $r$ is the rank of $F(X)$ and $d_{i} \in \mathbb{F}_{q_{0}}^{\ast}$. Let $\triangle= d_{1}d_{2}\cdots d_{r}$ (we assume $\triangle=0$ when $r=0$). Let $\eta_{0}$ be the quadratic multiplicative character of $\mathbb{F}_{q_{0}}$. Then $\eta_{0}(\triangle)$ is an invariant of $A$ under the action of $M \in GL_{s}(\mathbb{F}_{q_{0}})$.
\begin{lemma}[\cite{22}]\label{Le:2.2}
Let $F(X)$ be a quadratic form in $s$ variables of rank $r$ over $\mathbb{F}_{q_{0}}$, then
\begin{equation*}
\sum_{X \in \mathbb{F}_{q_{0}}^{s}}\zeta_{p}^{Tr^{q_{0}}_{p}(F(X))}=
\begin{cases}
\eta_{0}(\triangle)(-1)^{(d-1)r}q_{0}^{s-\frac{r}{2}},& p\equiv 1\mod 4,\\
\eta_{0}(\triangle)(\sqrt{-1})^{dr}(-1)^{(d-1)r}q_{0}^{s-\frac{r}{2}}, & p\equiv 3\mod 4.
\end{cases}
\end{equation*}
where $\zeta_{p}$ is a primitive $p$-th root of unity.
\end{lemma}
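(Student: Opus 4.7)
The plan is to reduce the sum to a product of one-variable quadratic Gauss sums by diagonalizing the form, then to invoke the classical evaluation of the quadratic Gauss sum over $\mathbb{F}_{q_{0}}$.

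First, I would use the fact stated just before the lemma: there is a non-singular $M \in GL_{s}(\mathbb{F}_{q_{0}})$ such that the substitution $X=ZM$ transforms $F(X)$ into $\sum_{i=1}^{r} d_{i} z_{i}^{2}$ with $d_{i} \in \mathbb{F}_{q_{0}}^{*}$. Because $X \mapsto ZM$ is a bijection on $\mathbb{F}_{q_{0}}^{s}$, the character sum is unchanged. Since $z_{r+1}, \ldots, z_{s}$ do not appear in the exponent, summing over these variables contributes a factor of $q_{0}^{s-r}$, and the remaining sum factors as
\begin{equation*}
\sum_{X \in \mathbb{F}_{q_{0}}^{s}} \zeta_{p}^{Tr^{q_{0}}_{p}(F(X))} = q_{0}^{s-r} \prod_{i=1}^{r} \Bigl( \sum_{z \in \mathbb{F}_{q_{0}}} \zeta_{p}^{Tr^{q_{0}}_{p}(d_{i} z^{2})} \Bigr).
\end{equation*}

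Next I would reduce each inner sum to a quadratic Gauss sum. For $d \in \mathbb{F}_{q_{0}}^{*}$, counting pre-images of the squaring map yields the standard identity
\begin{equation*}
\sum_{z \in \mathbb{F}_{q_{0}}} \zeta_{p}^{Tr^{q_{0}}_{p}(d z^{2})} = \eta_{0}(d) \, G(\eta_{0}),
\end{equation*}
where $G(\eta_{0}) = \sum_{y \in \mathbb{F}_{q_{0}}^{*}} \eta_{0}(y) \zeta_{p}^{Tr^{q_{0}}_{p}(y)}$ is the quadratic Gauss sum over $\mathbb{F}_{q_{0}} = \mathbb{F}_{p^{d}}$.

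Taking the product over $i=1, \ldots, r$, the characters combine into $\prod_{i=1}^{r} \eta_{0}(d_{i}) = \eta_{0}(\triangle)$, so the whole sum equals $q_{0}^{s-r} \, \eta_{0}(\triangle) \, G(\eta_{0})^{r}$. The main step is then to invoke the classical evaluation of the quadratic Gauss sum over $\mathbb{F}_{p^{d}}$ (see, e.g., Lidl--Niederreiter), namely
\begin{equation*}
G(\eta_{0}) = \begin{cases} (-1)^{d-1} \, q_{0}^{1/2}, & p \equiv 1 \pmod 4, \\ (-1)^{d-1} (\sqrt{-1})^{d} \, q_{0}^{1/2}, & p \equiv 3 \pmod 4. \end{cases}
\end{equation*}
Raising this to the $r$-th power produces the sign factor $(-1)^{(d-1)r}$ (times $(\sqrt{-1})^{dr}$ when $p \equiv 3 \pmod 4$) and the magnitude $q_{0}^{r/2}$; combining with the prefactor $q_{0}^{s-r}$ gives $q_{0}^{s-r/2}$, which is exactly the claimed formula.

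The only delicate part is tracking the signs and powers of $\sqrt{-1}$ through the Gauss sum evaluation; the diagonalization and factorization are routine, and the $p \equiv 1$ versus $p \equiv 3 \pmod 4$ case split is entirely confined to the classical Gauss sum identity, so once that identity is quoted correctly, the two cases of the lemma follow without further case analysis.
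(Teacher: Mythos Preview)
Your argument is correct and is the standard route: diagonalize, factor into one-variable sums, rewrite each as $\eta_{0}(d_i)G(\eta_{0})$, and then invoke the explicit evaluation of the quadratic Gauss sum over $\mathbb{F}_{p^{d}}$ (Theorem~5.15 in Lidl--Niederreiter). All steps check out, including the sign bookkeeping in both congruence cases.

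There is nothing in the paper to compare against: the lemma is quoted from \cite{22} and is not proved here. Your write-up is exactly the computation that underlies the cited result, so it is entirely appropriate as a self-contained proof.
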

For any fixed $(a,b,c) \in \mathbb{F}_{q}^{3}$, let $Q_{a,b,c}(x)=Tr^{q}_{q_{0}}(a x^{2}+bx^{p^{k}+1}+cx^{p^{2k}+1} )$,  we have the following result.
\begin{lemma}\label{Le:2.3}
For any $(a,b,c) \in \mathbb{F}_{q}^{3}\backslash \{(0,0,0)\}$, $Q_{a,b,c}(x)$ is a quadratic form over $\mathbb{F}_{q_{0}}$ with rank at least $s-4$.
\end{lemma}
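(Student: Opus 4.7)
The proof proceeds by computing the associated bilinear form, identifying the radical as the set of zeros of a linearized polynomial on $\mathbb{F}_q$, and bounding the dimension of that kernel.

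First, observe that $Q_{a,b,c}(x) = Tr^{q}_{q_{0}}(x \cdot F(x))$ where $F(x) = ax + bx^{p^{k}} + cx^{p^{2k}}$. Because $d \mid k$, each map $x \mapsto x^{p^{jk}}$ fixes $\mathbb{F}_{q_{0}}$, so $F$ is $\mathbb{F}_{q_{0}}$-linear. Hence $xF(x)$ is homogeneous of degree two in the $\mathbb{F}_{q_{0}}$-coordinates of $x$, and applying the $\mathbb{F}_{q_{0}}$-linear trace shows $Q_{a,b,c}$ is a quadratic form on the $s$-dimensional $\mathbb{F}_{q_{0}}$-space $\mathbb{F}_{q}$.

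Next, the associated symmetric bilinear form is $B(x,y)=Q_{a,b,c}(x+y)-Q_{a,b,c}(x)-Q_{a,b,c}(y) = Tr^{q}_{q_{0}}(xF(y)+yF(x))$. Using the Frobenius-invariance $Tr^{q}_{q_{0}}(u^{p^{k}})=Tr^{q}_{q_{0}}(u)$ (valid since $d\mid k$) to shift the Frobenius off $x$ in the second summand, I obtain $B(x,y) = Tr^{q}_{q_{0}}(x\cdot L(y))$ with
\[ L(y) = 2ay + by^{p^{k}} + b^{p^{m-k}}y^{p^{m-k}} + cy^{p^{2k}} + c^{p^{m-2k}}y^{p^{m-2k}}. \]
By non-degeneracy of the trace form on $\mathbb{F}_{q}/\mathbb{F}_{q_{0}}$, the radical of $B$ equals $\{y\in\mathbb{F}_{q} : L(y)=0\}$, so it suffices to prove $\dim_{\mathbb{F}_{q_{0}}}\ker(L\big|_{\mathbb{F}_q}) \leq 4$.

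Third, since $y\mapsto y^{p^{2k}}$ is injective on $\mathbb{F}_{q}$, the equation $L(y)=0$ is equivalent (using $y^{p^{m}}=y$ for $y\in\mathbb{F}_{q}$) to $N(y)=0$ where
\[ N(y) = cy + b^{p^{k}}y^{p^{k}} + 2a^{p^{2k}}y^{p^{2k}} + b^{p^{2k}}y^{p^{3k}} + c^{p^{2k}}y^{p^{4k}}. \]
This is a $p^{k}$-linearized polynomial of $p^{k}$-degree at most $4$; it is nonzero because its coefficients involve $c$, $b$, and $2a$, and $(a,b,c)\neq(0,0,0)$ with $p$ odd. By the classical theory of linearized polynomials, $\ker N \subset \overline{\mathbb{F}_{p}}$ is an $\mathbb{F}_{p^{k}}$-vector subspace of dimension at most $4$.

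The final and most delicate step—which I expect to be the main obstacle—is to upgrade this to a bound on $\dim_{\mathbb{F}_{q_{0}}}(\ker N \cap \mathbb{F}_{q})$. The naive estimate $|\ker N|\leq p^{4k}$ yields only $\mathbb{F}_{q_{0}}$-dimension $\leq 4k/d$, which is too weak when $k>d$. I would invoke the linear disjointness of $\mathbb{F}_{p^{k}}$ and $\mathbb{F}_{q}=\mathbb{F}_{p^{m}}$ over their intersection $\mathbb{F}_{p^{d}}=\mathbb{F}_{q_{0}}$ inside $\mathbb{F}_{p^{\mathrm{lcm}(k,m)}}$: the identity $\mathrm{lcm}(k,m)/d = (k/d)(m/d) = [\mathbb{F}_{p^{k}}:\mathbb{F}_{q_{0}}]\cdot[\mathbb{F}_{q}:\mathbb{F}_{q_{0}}]$ forces linear disjointness. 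Consequently any $\mathbb{F}_{q_{0}}$-linearly independent set in $\mathbb{F}_{q}$ is $\mathbb{F}_{p^{k}}$-linearly independent in $\overline{\mathbb{F}_{p}}$. Applied to an $\mathbb{F}_{q_{0}}$-basis of $\ker N \cap \mathbb{F}_{q}$, which lies in the at-most-$4$-dimensional $\mathbb{F}_{p^{k}}$-space $\ker N$, this yields $\dim_{\mathbb{F}_{q_{0}}}(\ker N \cap \mathbb{F}_{q})\leq 4$, whence $\mathrm{rank}(Q_{a,b,c})\geq s-4$.
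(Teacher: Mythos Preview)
Your proof is correct and follows essentially the same route the paper points to via \cite{12}: compute the symmetric bilinear form, identify the radical with the $\mathbb{F}_q$-zeros of a $p^{k}$-linearized polynomial of $p^{k}$-degree $4$, and bound the $\mathbb{F}_{q_0}$-dimension of that zero set. The step you flagged as the ``main obstacle'' --- passing from the naive bound $|\ker N|\le p^{4k}$ to the sharper bound $\dim_{\mathbb{F}_{q_0}}(\ker N\cap\mathbb{F}_q)\le 4$ via linear disjointness of $\mathbb{F}_{p^{k}}$ and $\mathbb{F}_{p^{m}}$ over $\mathbb{F}_{p^{d}}$ --- is exactly the standard device used in Luo--Feng and its descendants, and your degree count $\mathrm{lcm}(k,m)/d=(k/d)(m/d)$ is the correct justification.
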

\begin{proof}
The proof is similar to the proof of Lemma 2 in \cite{12}, so we omit the details.
\end{proof}


\section{\small{A CLASS OF FIVE-WEIGHT CYCLIC CODES AND THEIR WEIGHT DISTRIBUTION}}
We follow the notation and conditions fixed in Section 1 and 2. It is easy to check that $h_{0}(x)$, $h_{1}(x)$ and $h_{2}(x)$ are polynomials of degree $m_{0}$ and are pairwise distinct when $s\geq 5$. Let $\mathcal{C}$ be the cyclic code with parity-check polynomial  $h_{0}(x)h_{1}(x)h_{2}(x)$. Then $\mathcal{C}$ has length $q-1$ and dimension $3m_{0}$. Moreover, it can be expressed as \[\mathcal{C}=\{\mathbf{c}_{(a,b,c)}: a, b,c \in \mathbb{F}_{q}\},\] where $\mathbf{c}_{(a,b,c)}=\big(Tr^{q}_{p^{t}}(a\pi^{2i}+b\pi^{(p^{k}+1)i}+c\pi^{(p^{2k}+1)i})\big)_{i=0}^{q-2}$. The weight of the codeword $\mathbf{c}_{(a,b,c)}=(c_{0}, c_{1},\ldots,c_{q-2})$  can be expressed by exponential sums in the following way.


\begin{equation}\label{Eq:3.1}
 \begin{split}
 W(\mathbf{c}_{(a,b,c)})
   &= \#\{0\leq i\leq p^{m}-2: c_{i}\neq 0\}\\
   &= q-1-\frac{1}{p^{t}}\sum_{i=0}^{q-2}\sum_{y \in \mathbb{F}_{p^{t}}}\zeta_{p}^{Tr^{p^{t}}_{p}(yc_{i})} \\
   &=  q-1-\frac{1}{p^{t}}\sum_{i=0}^{q-2}\sum_{y \in \mathbb{F}_{p^{t}}}\zeta_{p}^{Tr^{p^{t}}_{p}(y\cdot Tr^{q}_{p^{t}}(a\pi^{2i}+b\pi^{(p^{k}+1)i}+c\pi^{(p^{2k}+1)i}))}\\
   &=   q-1-\frac{1}{p^{t}}\sum_{x \in \mathbb{F}_{q}^{\ast}}\sum_{y \in \mathbb{F}_{p^{t}}}\zeta_{p}^{Tr^{p^{t}}_{p}(y\cdot Tr^{q}_{p^{t}}(ax^{2}+bx^{p^{k}+1}+cx^{p^{2k}+1}))}\\
   &=   p^{m-t}(p^{t}-1)- \frac{1}{p^{t}}\sum_{y \in \mathbb{F}_{p^{t}}^{\ast}}\sum_{x \in \mathbb{F}_{q}}\zeta_{p}^{Tr^{p^{t}}_{p}(y\cdot Tr^{q}_{p^{t}}(ax^{2}+bx^{p^{k}+1}+cx^{p^{2k}+1}))}\\
& =  p^{m-t}(p^{t}-1)- \frac{1}{p^{t}}(\sum_{y \in SQ}\sum_{x \in \mathbb{F}_{q}}\zeta_{p}^{Tr(a(xy^{\frac{1}{2}})^{2}+b(xy^{\frac{1}{2}})^{p^{k}+1}+c(xy^{\frac{1}{2}})^{p^{2k}+1})}\\
&\relphantom{=} {}+\sum_{y \in \overline{SQ}}\sum_{x \in \mathbb{F}_{q}}\zeta_{p}^{Tr^{q_{0}}_{p}(y\cdot Tr^{q}_{q_{0}}(ax^{2}+bx^{p^{k}+1}+cx^{p^{2k}+1}))})\\
& =   p^{m-t}(p^{t}-1)- \frac{1}{p^{t}}(\sum_{y \in SQ}\sum_{x \in \mathbb{F}_{q}}\zeta_{p}^{Tr(a(xy^{\frac{1}{2}})^{2}+b(xy^{\frac{1}{2}})^{p^{k}+1}+c(xy^{\frac{1}{2}})^{p^{2k}+1})}\\
&\relphantom{=} {}+\sum_{y \in SQ}\sum_{x \in \mathbb{F}_{q}}\zeta_{p}^{Tr^{q_{0}}_{p}(\lambda \cdot Tr^{q}_{q_{0}}(a(xy^{\frac{1}{2}})^{2}+b(xy^{\frac{1}{2}})^{p^{k}+1}+c(xy^{\frac{1}{2}})^{p^{2k}+1}))})\\
& = p^{m-t}(p^{t}-1)- \frac{p^{t}-1}{2p^{t}}(\sum_{x \in \mathbb{F}_{q}}\zeta_{p}^{Tr^{q_{0}}_{p}(Q_{a,b,c}(x))}+\sum_{x \in \mathbb{F}_{q}}\zeta_{p}^{Tr^{q_{0}}_{p}(\lambda Q_{a,b,c}(x))}),
\end{split}
\end{equation}
where $SQ$($\overline{SQ}$, resp.) denotes the set of nonzero square elements (non-square elements, resp.) of $\mathbb{F}_{p^{t}}$ and $\lambda$ is a non-square in $\mathbb{F}_{p^{t}}$.
If we define\begin{equation}
 \label{Eq:3.3}
 T(a,b,c)=\sum_{x \in \mathbb{F}_{q}}\zeta_{p}^{Tr^{q_{0}}_{p}(Q_{a,b,c}(x))}+\sum_{x \in \mathbb{F}_{q}}\zeta_{p}^{Tr^{q_{0}}_{p}(\lambda Q_{a,b,c}(x))},
  \end{equation}
then the weight distribution of the code $\mathcal{C}$ is completely determined by the value distribution of $T(a,b,c)$.
Firstly, we have the following lemma.
\begin{lemma}\label{Le:3.2} For any fixed  $(a,b,c) \in \mathbb{F}_{q}^{3}\backslash \{(0,0,0)\}$,
let  $T(a,b,c)$ be defined by (\ref{Eq:3.3}) and $r$ be the rank of $Q_{a,b,c}(x)$.
\begin{itemize}
  \item If $r$ is even, then $T(a,b,c)=\pm 2q_{0}^{s-\frac{r}{2}}$.
  \item If $r$ is odd, then $T(a,b,c)=0$.
\end{itemize}
\end{lemma}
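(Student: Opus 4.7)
The plan is to apply Lemma~\ref{Le:2.2} to each of the two exponential sums in the definition of $T(a,b,c)$ and combine the results, exploiting the parity of the rank $r$.

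First, observe that the quadratic form $\lambda Q_{a,b,c}(x)$ has the same rank $r$ as $Q_{a,b,c}(x)$, but under the same diagonalization its invariant product becomes $\lambda^{r}\triangle$ instead of $\triangle$. Consequently $\eta_{0}(\lambda^{r}\triangle)=\eta_{0}(\lambda)^{r}\eta_{0}(\triangle)$, so the two sums differ by the factor $\eta_{0}(\lambda)^{r}$ (both in the $p\equiv1\pmod4$ and in the $p\equiv3\pmod4$ formulas of Lemma~\ref{Le:2.2}, since the factor $(\sqrt{-1})^{dr}(-1)^{(d-1)r}$ depends only on $r$, not on the diagonal entries).

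The key observation is that $\eta_{0}(\lambda)=-1$. Indeed, $\lambda$ is a non-square in $\mathbb{F}_{p^{t}}$, so $\lambda^{(p^{t}-1)/2}=-1$. Writing
\[
\lambda^{(p^{d}-1)/2}=\bigl(\lambda^{(p^{t}-1)/2}\bigr)^{(p^{d}-1)/(p^{t}-1)}
=(-1)^{1+p^{t}+p^{2t}+\cdots+p^{(d/t-1)t}},
\]
and using that $d/t$ is odd and $p$ is odd (so the exponent is a sum of $d/t$ odd integers, hence odd), we conclude $\lambda^{(p^{d}-1)/2}=-1$, i.e.\ $\lambda$ is also a non-square in $\mathbb{F}_{q_{0}}=\mathbb{F}_{p^{d}}$.

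Given $\eta_{0}(\lambda)=-1$, the two sums in $T(a,b,c)$ are equal when $r$ is even and opposite when $r$ is odd. In the odd case this immediately yields $T(a,b,c)=0$. In the even case the common value is $\eta_{0}(\triangle)(\sqrt{-1})^{dr}(-1)^{(d-1)r}q_{0}^{s-r/2}$ (in either parity of $p\bmod 4$), and for even $r$ the prefactor $(\sqrt{-1})^{dr}(-1)^{(d-1)r}$ is $\pm1$, so doubling gives $T(a,b,c)=\pm 2q_{0}^{s-r/2}$. The only non-routine step is verifying $\eta_{0}(\lambda)=-1$, which is exactly where the hypothesis that $d/t$ is odd is used; everything else is a direct bookkeeping from Lemma~\ref{Le:2.2}.
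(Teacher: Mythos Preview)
Your proof is correct and follows essentially the same route as the paper: apply Lemma~\ref{Le:2.2} to both summands to see that the second equals $\eta_{0}(\lambda)^{r}$ times the first, then use that $d/t$ is odd to conclude $\eta_{0}(\lambda)=-1$, so the two sums cancel for odd $r$ and double for even $r$. One cosmetic slip: the displayed expression you give for the common value is literally the $p\equiv 3\pmod 4$ formula, not both cases, but since for even $r$ each formula in Lemma~\ref{Le:2.2} yields a real value of absolute value $q_{0}^{s-r/2}$, the conclusion $T(a,b,c)=\pm 2q_{0}^{s-r/2}$ stands either way.
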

\begin{proof}
According to Lemma \ref{Le:2.2}, we have
\[
\sum_{x \in \mathbb{F}_{q}}\zeta_{p}^{Tr^{q_{0}}_{p}(\lambda Q_{a,b,c}(x))}=\sum_{x \in \mathbb{F}_{q}}\zeta_{p}^{Tr^{q_{0}}_{p}(Q_{a,b,c}(x))}\eta_{0}(\lambda^{r}).
\]
Then $T(a,b,c)=(1+\eta_{0}(\lambda^{r}))\sum_{x \in \mathbb{F}_{q}}\zeta_{p}^{Tr^{q_{0}}_{p}(Q_{a,b,c}(x))}$. Since $d/t$ is odd, $\lambda$ is also a non-square in $\mathbb{F}_{q_{0}}$. Thus, if $r$ is even,
$T(a,b,c)=\pm 2q_{0}^{s-\frac{r}{2}}$ and $0$, otherwise.
This completes the proof.
\end{proof}
\begin{theorem}\label{Th:3.7}
Let $T(a,b,c)$ be defined by (\ref{Eq:3.3}). Then as $(a,b,c)$ runs through $\mathbb{F}_{q}^{3}$, the value distribution of $T(a,b,c)$ is given by Table 1.
\end{theorem}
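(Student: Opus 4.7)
The plan is to determine, for each admissible value $v$ of $T(a,b,c)$, the number $n_v$ of triples $(a,b,c) \in \mathbb{F}_q^3$ with $T(a,b,c) = v$, and then to read off Table 1 by substituting into formula (\ref{Eq:3.1}). Combining Lemma \ref{Le:2.3} (which forces the rank $r$ of $Q_{a,b,c}$ to satisfy $s-4 \le r \le s$) with the parity observation that $s$ is odd, the only even ranks available are $r = s-1$ and $r = s-3$. Lemma \ref{Le:3.2} then gives $T(a,b,c) \in \{0,\, \pm 2 q_0^{(s+1)/2},\, \pm 2 q_0^{(s+3)/2}\}$ for every nonzero triple, while $T(0,0,0) = 2q$. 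Hence $T$ takes at most six distinct values and $\mathcal{C}$ has at most five nonzero weights.

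Write $n_0$ for the number of nonzero triples with $T = 0$, and $n_1^{\pm}$, $n_2^{\pm}$ for the numbers with $T = \pm 2 q_0^{(s+3)/2}$ and $T = \pm 2 q_0^{(s+1)/2}$ respectively. The trivial identity $1 + n_0 + n_1^+ + n_1^- + n_2^+ + n_2^- = q^3$ together with the four power-moment identities
\begin{equation*}
M_j \;=\; \sum_{(a,b,c) \in \mathbb{F}_q^3} T(a,b,c)^j, \qquad j = 1, 2, 3, 4,
\end{equation*}
produces a linear system of five equations in the five unknowns.

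Each $M_j$ is evaluated by expanding $T(a,b,c)^j$ from (\ref{Eq:3.3}), interchanging the order of summation, and exploiting the fact that for fixed $x \in \mathbb{F}_q$ the form $Q_{a,b,c}(x)$ depends $\mathbb{F}_q$-linearly on $(a,b,c)$. By orthogonality this collapses $M_j$ to a weighted count, over $(\epsilon_1,\ldots,\epsilon_j) \in \{1,\lambda\}^j$, of tuples $(x_1,\ldots,x_j) \in \mathbb{F}_q^j$ satisfying the simultaneous system
\begin{equation*}
\sum_{i=1}^{j} \epsilon_i x_i^{2} \;=\; \sum_{i=1}^{j} \epsilon_i x_i^{p^{k}+1} \;=\; \sum_{i=1}^{j} \epsilon_i x_i^{p^{2k}+1} \;=\; 0,
\end{equation*}
each solution contributing $q^3$. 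These counts are evaluated via Lemma \ref{Le:2.2} together with the linearized-polynomial techniques behind Lemma 2 of \cite{12} and the analogous moment computations of \cite{22}. Solving the resulting $5 \times 5$ system for $n_0, n_1^\pm, n_2^\pm$ then yields Table 1.

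The principal obstacle lies in Step 3: although $M_1$ and $M_2$ reduce to elementary character-sum evaluations, the moments $M_3$ and $M_4$ require counting zeros of the three simultaneous Dickson-type equations above, and in the distinguished regime $t \ge 2$ with $2 \nmid m/(m,k)$ but $2 \mid m/(m,k/2)$ flagged in the introduction these counts are not immediately covered by the arguments of \cite{22, 20}. The decisive additional ingredient is the hypothesis that $d/t$ is odd, which guarantees (exactly as exploited in Lemma \ref{Le:3.2}) that a non-square $\lambda$ of $\mathbb{F}_{p^{t}}$ remains a non-square in $\mathbb{F}_{q_0}$; this parity fact propagates through the quadratic character-sum evaluations and forces the clean five-valued distribution.
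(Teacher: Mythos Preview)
Your overall architecture matches the paper's exactly: identify the six possible values of $T(a,b,c)$ via Lemmas \ref{Le:2.3} and \ref{Le:3.2}, set up the power moments $M_j=\sum T^j$ for $j=1,2,3,4$, reduce each $M_j$ by orthogonality to $q^3$ times a count of solutions $(x_1,\dots,x_j)$ of the simultaneous systems $\sum \epsilon_i x_i^2=\sum \epsilon_i x_i^{p^k+1}=\sum \epsilon_i x_i^{p^{2k}+1}=0$, and solve the resulting linear system. That is precisely what the paper does.

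The gap is at the step you yourself flag as the ``principal obstacle'': you do not actually evaluate the solution counts for $j=3,4$, and the tools you cite for doing so are not the right ones. Lemma \ref{Le:2.2} governs Gauss-type sums of a single quadratic form and plays no role in counting common zeros of the three equations; the linearized-polynomial argument behind Lemma 2 of \cite{12} bounds the rank of $Q_{a,b,c}$ and is likewise irrelevant here; and the paper's introduction explicitly states that the arguments of \cite{22,20} do \emph{not} carry over in a straightforward way to the present regime. Finally, the ``decisive additional ingredient'' you name---that $d/t$ odd forces a non-square of $\mathbb{F}_{p^t}$ to stay non-square in $\mathbb{F}_{q_0}$---is used only in Lemma \ref{Le:3.2} and in normalizing $\lambda=-1$ when $q_0\equiv 3\pmod 4$; it does not by itself yield the counts.

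What the paper actually does is devote Lemmas \ref{Le:3.9}--\ref{Le:3.14} (with proofs deferred to the Appendix) to computing the seven numbers $N_2,\overline{N_2},N_3,\overline{N_3},N_4,\overline{N_4},\widetilde{N_4}$ explicitly. The two-variable cases are immediate from the non-squareness of $-1$. For three variables one dehomogenizes and shows (as in Lemma 4.3 of \cite{20}) that any solution must lie in $\mathbb{F}_{p^d}^2$, reducing to a single conic count. The four-variable cases are the heart of the matter: one writes $N_4=\sum_{(\overline a,\overline b,\overline c)} N_{1(\overline a,\overline b,\overline c)}N_{2(\overline a,\overline b,\overline c)}$ as a convolution of two-variable level-set counts, and then (Lemma \ref{4.1}) parametrizes solutions of $x_1^2+x_2^2=1$ by $\theta=x_1-tx_2\in\mathbb{F}_{p^{2m}}^*$ with $t^2=-1$, so that $\theta^{p^m+1}=1$ and the second equation becomes a quadratic in $w=\theta^{p^k-1}$. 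The identity $\gcd(p^k-1,p^m+1)=p^d+1$ then pins the fibre sizes to $0$, $p^d+1$, or $2(p^d+1)$, and a short count finishes. Without this (or an equivalent) computation, the linear system for $n_0,n_1^\pm,n_2^\pm$ has no right-hand side and your argument does not close.
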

\begin{table}[htbp]
\caption{Value Distribution of $T(a,b,c)$}
\centering
\begin{tabular}{ll}
 \hline
 Value& Frequency\\
 \hline
 $2p^{m}$ &1\\
 $0$ & $(p^{m}-1)(p^{2m}-p^{2m-d}+p^{2m-4d}+p^{m}-p^{m-d}-p^{m-3d}+1)$\\
 $2p^{\frac{m+d}{2}}$ &  $\frac{(p^{m+d}+p^{(m+3d)/2})(p^{2m}-p^{2m-2d}-p^{2m-3d}+p^{m-2d}+p^{m-3d}-1)}{2(p^{2d}-1)}$\\
 $-2p^{\frac{m+d}{2}}$ &   $\frac{(p^{m+d}-p^{(m+3d)/2})(p^{2m}-p^{2m-2d}-p^{2m-3d}+p^{m-2d}+p^{m-3d}-1)}{2(p^{2d}-1)}$\\
 $2p^{\frac{m+3d}{2}}$ &  $\frac{(p^{m-3d}+p^{(m-3d)/2})(p^{m}-1)(p^{m-d}-1)}{2(p^{2d}-1)}$\\
 $-2p^{\frac{m+3d}{2}}$ & $\frac{(p^{m-3d}-p^{(m-3d)/2})(p^{m}-1)(p^{m-d}-1)}{2(p^{2d}-1)}$\\
 \hline
\end{tabular}
\end{table}
We prove this theorem only for the case that $q_{0}\equiv 3 \pmod 4$. The proof for the case that $q_{0}\equiv 1 \pmod 4$ is similar and omitted. Hence we assume that $q_{0}\equiv 3 \pmod 4$ from now on. In order to determine the value distribution of $T(a,b,c)$, we need a series of  lemmas. Before introducing them, for any positive integer $k$, we define $d_{1}=p^{k}+1$ and $d_{2}=p^{2k}+1$. Since $-1$ is a non-square in $\mathbb{F}_{p^{t}}$ when $q_{0}\equiv 3 \pmod 4$, so in the following, we set $\lambda=-1$.
\begin{lemma}\label{Le:3.9}
Let $q_{0} \equiv 3 \pmod 4$ and let $N_{2}$ denote the number of solutions $(x_{1},x_{2}) \in \mathbb{F}_{p^{m}}^{2}$ of the following system of equations
\begin{equation*}
\begin{cases}
x_{1}^{2}+ x_{2}^{2}=0\\
x_{1}^{d_{1}}+x_{2}^{d_{1}}=0\\
x_{1}^{d_{2}}+x_{2}^{d_{2}}=0.
\end{cases}
\end{equation*}
Then $N_{2}=1$.
\end{lemma}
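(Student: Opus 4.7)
The plan is to observe that the conclusion $N_2=1$ already follows from the first equation alone, once we exploit the hypothesis $q_0\equiv 3\pmod 4$ together with the standing assumption that $s=m/d$ is odd.

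Concretely, I would first reduce to showing that $x_1^2+x_2^2=0$ has only the trivial solution $(0,0)$ in $\mathbb{F}_{p^m}^2$; the remaining two equations are then automatic. Suppose $(x_1,x_2)\neq (0,0)$ satisfies $x_1^2+x_2^2=0$. If $x_2=0$, then $x_1^2=0$ forces $x_1=0$, contradiction; so $x_2\in\mathbb{F}_{p^m}^{\ast}$ and we may write $(x_1/x_2)^2=-1$. This shows $-1$ is a square in $\mathbb{F}_{p^m}$.

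The next step is to derive a contradiction by checking that $-1$ is in fact a non-square in $\mathbb{F}_{p^m}$ under our hypotheses. The condition $q_0=p^d\equiv 3\pmod 4$ forces $p\equiv 3\pmod 4$ and $d$ odd (a short case analysis on $p\bmod 4$). Combined with $s$ odd, this gives $m=sd$ odd, hence $p^m=(p^d)^s\equiv 3^s\equiv 3\pmod 4$. Therefore $4\nmid p^m-1$, i.e. $-1$ is not a square in $\mathbb{F}_{p^m}^{\ast}$. This contradicts the previous paragraph, so the only solution is $(x_1,x_2)=(0,0)$, giving $N_2=1$.

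There is no real obstacle here beyond keeping the parity bookkeeping straight: the entire argument rests on the single observation that the oddness of $s$ propagates the non-squareness of $-1$ from $\mathbb{F}_{q_0}$ up to $\mathbb{F}_{p^m}$. The equations involving $d_1$ and $d_2$ play no role in this particular count; they will presumably matter in subsequent lemmas where the rank of $Q_{a,b,c}$ is computed via related simultaneous systems.
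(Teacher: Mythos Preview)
Your proof is correct and follows exactly the paper's approach: the paper's one-line proof simply asserts that the only solution is $(0,0)$ because $-1$ is a non-square in $\mathbb{F}_{p^m}$ when $q_0\equiv 3\pmod 4$. You have merely spelled out the parity bookkeeping behind that assertion, which is fine.
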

\begin{proof}
This system of equations have only one solution $(0,0)$, since  $-1$ is a  non-square  in $\mathbb{F}_{p^{m}}$ when $q_{0} \equiv 3 \pmod 4$.
\end{proof}
\begin{lemma}\label{Le:3.8}
Let $q_{0} \equiv 3 \pmod 4$ and let $\overline{N_{2}}$ denote the number of solutions $(x_{1},x_{2}) \in \mathbb{F}_{p^{m}}^{2}$ of the following system of equations
\begin{equation}\label{Eq:2.3}
\begin{cases}
x_{1}^{2}- x_{2}^{2}=0\\
x_{1}^{d_{1}}-x_{2}^{d_{1}}=0\\
x_{1}^{d_{2}}-x_{2}^{d_{2}}=0.
\end{cases}
\end{equation}
Then $\overline{N_{2}}=2p^{m}-1$.
\end{lemma}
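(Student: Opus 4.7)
The plan is to exploit the fact that the first equation $x_1^2 - x_2^2 = 0$ factors as $(x_1-x_2)(x_1+x_2) = 0$ in the field $\mathbb{F}_{p^m}$, forcing $x_1 = \pm x_2$. Once this is established, I expect the remaining two equations to be satisfied automatically, because $p$ is odd and therefore both exponents $d_1 = p^k+1$ and $d_2 = p^{2k}+1$ are even.

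More precisely, I would argue as follows. First, observe that $(0,0)$ is always a solution, so assume $(x_1,x_2) \neq (0,0)$. From $x_1^2 = x_2^2$ one gets $x_1 = \varepsilon x_2$ with $\varepsilon \in \{+1,-1\}$. Substituting into the remaining equations,
\begin{equation*}
x_1^{d_i} - x_2^{d_i} = (\varepsilon^{d_i} - 1)\, x_2^{d_i}, \qquad i=1,2,
\end{equation*}
and since $d_1, d_2$ are even, $\varepsilon^{d_i} = 1$ for both choices of $\varepsilon$. Hence every pair $(x_1,x_2) \in \mathbb{F}_{p^m}^{2}$ with $x_1 = \pm x_2$ satisfies the full system.

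To count these pairs, I would split them into the diagonal $\{(x,x) : x \in \mathbb{F}_{p^m}\}$ and the anti-diagonal $\{(-x,x) : x \in \mathbb{F}_{p^m}\}$; the two sets intersect precisely when $x = -x$, i.e.\ $x = 0$ (since $p$ is odd). Therefore the total count is $p^m + p^m - 1 = 2p^m - 1$, giving the claim.

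The proof is entirely routine; there is no real obstacle. The only place one must be slightly careful is the parity of $d_1$ and $d_2$, which relies only on $p$ being odd (a hypothesis from Section 1), so the extra assumption $q_0 \equiv 3 \pmod 4$ does not actually enter here, unlike in Lemma \ref{Le:3.9} where the non-squareness of $-1$ was essential.
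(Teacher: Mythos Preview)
Your proof is correct and follows essentially the same approach as the paper: the paper also observes that a pair $(x_1,x_2)$ solves the full system if and only if it solves the first equation alone, and then counts those solutions. The only difference is that the paper cites Lemma~6.24 of Lidl--Niederreiter for the count $2p^m-1$, whereas you carry out the elementary count directly via the factorization $x_1^2-x_2^2=(x_1-x_2)(x_1+x_2)$; your argument is slightly more self-contained and also makes explicit (via the evenness of $d_1,d_2$) why the hypothesis $q_0\equiv 3\pmod 4$ is not actually needed here.
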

\begin{proof}
We observe that $(x_{1},x_{2})$ is a solution of (\ref{Eq:2.3}) if and only if $(x_{1},x_{2})$ is a solution of the first equation of it. So the conclusion follows from the Lemma 6.24 in \cite{11}.
\end{proof}
\begin{lemma}\label{Le:3.10}
Let $q_{0} \equiv 3 \pmod 4$ and let $N_{3}$ denote the number of solutions $(x_{1},x_{2},x_{3}) \in \mathbb{F}_{p^{m}}^{3}$ of the following system of equations
\begin{equation}\label{Eq:2.5}
\begin{cases}
x_{1}^{2}+x_{2}^{2}+ x_{3}^{2}=0\\
x_{1}^{d_{1}}+x_{2}^{d_{1}}+ x_{3}^{d_{1}}=0\\
x_{1}^{d_{2}}+x_{2}^{d_{2}}+ x_{3}^{d_{2}}=0.
\end{cases}
\end{equation}
Then $N_{3}=p^{m+d}+p^{m}-p^{d}$.
\end{lemma}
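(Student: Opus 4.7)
The plan is to recast the three equations as a single linear-algebraic condition on the triple $(x_{1},x_{2},x_{3})$ and then enumerate solutions according to the $\mathbb{F}_{q_{0}}$-subspace $V$ of $\mathbb{F}_{p^{m}}$ generated by its coordinates. Writing $\mathbf{x}=(x_{1},x_{2},x_{3})$ and letting $\phi$ denote componentwise Frobenius $x\mapsto x^{p^{k}}$, the system (\ref{Eq:2.5}) reads $\langle \mathbf{x},\phi^{j}(\mathbf{x})\rangle=0$ for $j=0,1,2$ with respect to the standard symmetric bilinear form on $\mathbb{F}_{p^{m}}^{3}$. Since this form is nondegenerate, a nonzero solution $\mathbf{x}$ forces $\mathbf{x},\phi(\mathbf{x}),\phi^{2}(\mathbf{x})$ to be $\mathbb{F}_{p^{m}}$-linearly dependent, i.e.\ the $3\times 3$ matrix $\big(x_{j}^{p^{(i-1)k}}\big)_{i,j}$ has zero determinant. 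Since $\phi$ generates $\mathrm{Gal}(\mathbb{F}_{p^{m}}/\mathbb{F}_{q_{0}})$ (because $\gcd(k/d,s)=1$) and its order $s$ exceeds $3$, the Moore determinant theorem translates this into $\mathbb{F}_{q_{0}}$-linear dependence of $x_{1},x_{2},x_{3}$, so $r:=\dim_{\mathbb{F}_{q_{0}}}V\leq 2$.

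The enumeration then proceeds by the value of $r$. The case $r=0$ contributes only the trivial solution. For $r=1$, write $x_{i}=c_{i}w$ with $c_{i}\in\mathbb{F}_{q_{0}}$ and $w\in\mathbb{F}_{p^{m}}^{*}$; since $d\mid k$, every $c\in\mathbb{F}_{q_{0}}$ satisfies $c^{p^{k}}=c$, so $c^{d_{1}}=c^{d_{2}}=c^{2}$ and all three equations collapse into $c_{1}^{2}+c_{2}^{2}+c_{3}^{2}=0$. The nondegenerate ternary quadratic form has $q_{0}^{2}$ zeros in $\mathbb{F}_{q_{0}}^{3}$, so after removing the zero vector and multiplying by the number $(p^{m}-1)/(q_{0}-1)$ of 1-dimensional $\mathbb{F}_{q_{0}}$-subspaces of $\mathbb{F}_{p^{m}}$, this case contributes $(q_{0}+1)(p^{m}-1)=(p^{d}+1)(p^{m}-1)$ solutions.

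The main obstacle is ruling out $r=2$. Assume without loss of generality that $x_{1},x_{2}$ are $\mathbb{F}_{q_{0}}$-independent and $x_{3}=\alpha x_{1}+\beta x_{2}$ with $\alpha,\beta\in\mathbb{F}_{q_{0}}$. Substituting into the first equation yields $(1+\alpha^{2})x_{1}^{2}+2\alpha\beta x_{1}x_{2}+(1+\beta^{2})x_{2}^{2}=0$. The leading coefficient $1+\alpha^{2}$ is nonzero since $-1$ is a non-square in $\mathbb{F}_{q_{0}}$ (using $q_{0}\equiv 3\pmod{4}$), so $T:=x_{1}/x_{2}$ is a root of a degree-two polynomial over $\mathbb{F}_{q_{0}}$ and therefore lies in $\mathbb{F}_{q_{0}^{2}}$. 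Combined with $T\in\mathbb{F}_{p^{m}}$ and the identity $\mathbb{F}_{q_{0}^{2}}\cap\mathbb{F}_{p^{m}}=\mathbb{F}_{q_{0}}$ (which uses that $s$ is odd, so $\gcd(2,s)=1$), this forces $T\in\mathbb{F}_{q_{0}}$, contradicting the $\mathbb{F}_{q_{0}}$-independence of $x_{1},x_{2}$. Summing the contributions yields $N_{3}=1+(p^{d}+1)(p^{m}-1)=p^{m+d}+p^{m}-p^{d}$.
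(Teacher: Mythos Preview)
Your proof is correct and takes a genuinely different route from the paper's. The paper normalizes $x_{3}=1$, invokes the method of Lemma~4.3 in \cite{20} to force $(x_{1},x_{2})\in\mathbb{F}_{p^{d}}^{2}$, and then counts solutions of the single quadric $x_{1}^{2}+x_{2}^{2}+1=0$ over $\mathbb{F}_{p^{d}}$. You instead interpret the three equations as $\langle\mathbf{x},\phi^{j}(\mathbf{x})\rangle=0$ and use a Moore-type argument to bound $\dim_{\mathbb{F}_{q_{0}}}\langle x_{1},x_{2},x_{3}\rangle\leq 2$, then eliminate the $r=2$ case with a clean quadratic-extension trick exploiting $q_{0}\equiv 3\pmod 4$ and the oddness of $s$. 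Your approach is more structural and self-contained (it does not defer to \cite{20}), and it makes transparent \emph{why} solutions are forced into an $\mathbb{F}_{q_{0}}$-line; the paper's approach is more elementary in flavor but leans on an external lemma. One small remark: the Moore determinant you invoke is for the generator $\phi=\sigma^{k/d}$ of $\mathrm{Gal}(\mathbb{F}_{p^{m}}/\mathbb{F}_{q_{0}})$ rather than the standard Frobenius $\sigma$; this generalization is valid (a short induction on $n$, using Artin-style independence of characters, shows the $n\times n$ $\phi$-Moore matrix is invertible whenever $x_{1},\dots,x_{n}$ are $\mathbb{F}_{q_{0}}$-independent and $n\leq s$), but it is worth stating explicitly since the classical statement is usually given only for $\sigma$.
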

\begin{proof}

\noindent Case I, when $x_{3}=0$. In this case, by Lemma \ref{Le:3.9}, the number of solutions of (\ref{Eq:2.5}) is  $1$.

\noindent Case II, when $x_{3}\neq 0$. In this case, for any fixed $x_{3}$, the equation system (\ref{Eq:2.5}) has the same number of solutions $(x_{1},x_{2}) \in \mathbb{F}_{p^{m}}^{2}$ as
the system
\begin{equation}\label{Eq:2.6}
\begin{cases}
x_{1}^{2}+x_{2}^{2}+1=0\\
x_{1}^{d_{1}}+x_{2}^{d_{1}}+1=0\\
x_{1}^{d_{2}}+x_{2}^{d_{2}}+1=0.
\end{cases}
\end{equation}
As assumed in the beginning of this paper, we have gcd$(m,k)=d$. Then by the same method as in the proof of Lemma 4.3 in \cite{20}, we can prove that if $(x_{1},x_{2})\in \mathbb{F}_{p^{m}}^{2}$ is a solution of (\ref{Eq:2.6}), then $(x_{1},x_{2}) \in \mathbb{F}_{p^d}^{2}$.
Furthermore, if $(x_{1},x_{2}) \in \mathbb{F}_{p^{d}}^{2}$ is a solution of the first equation of (\ref{Eq:2.6}), then it is a solution of (\ref{Eq:2.6}).
So the number of solutions $(x_{1},x_{2}) \in \mathbb{F}_{p^{m}}^{2}$ of (\ref{Eq:2.6}) is equal to the number of solutions $(x_{1},x_{2}) \in \mathbb{F}_{p^d}^{2}$ satisfying  the first equation of it, which is $p^d+1$ by Lemma 6.24 in \cite{11}. Thus (\ref{Eq:2.6}) has exactly $p^d+1$ solutions.

Summarizing the results of the two cases above, we have that $N_{3}=1+(p^{m}-1)(p^d+1)=p^{m+d}+p^{m}-p^{d}$. This completes the proof.
\end{proof}
\begin{lemma}\label{3.11}
Let $q_{0} \equiv 3 \pmod 4$ and let $\overline{N_{3}}$ denote the number of solutions $(x_{1},x_{2},x_{3}) \in \mathbb{F}_{p^{m}}^{3}$ of the following system of equations
\begin{equation*}
\begin{cases}
x_{1}^{2}+x_{2}^{2}- x_{3}^{2}=0\\
x_{1}^{d_{1}}+x_{2}^{d_{1}}-x_{3}^{d_{1}}=0\\
x_{1}^{d_{2}}+x_{2}^{d_{2}}-x_{3}^{d_{2}}=0.
\end{cases}
\end{equation*}
Then $\overline{N_{3}}=p^{m+d}+p^{m}-p^{d}$.
\end{lemma}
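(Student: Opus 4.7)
The plan is to mimic exactly the two-case analysis from the proof of Lemma~\ref{Le:3.10}, adapting it to the sign change on the $x_3$ terms. The key observation is that the sign change is irrelevant to the final count, because it only affects one step (counting points on an affine conic) where both signs yield the same answer.

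First I would split on whether $x_3 = 0$. When $x_3 = 0$, the system becomes
\begin{equation*}
x_1^2 + x_2^2 = 0,\qquad x_1^{d_1} + x_2^{d_1} = 0,\qquad x_1^{d_2} + x_2^{d_2} = 0,
\end{equation*}
which is precisely the system of Lemma~\ref{Le:3.9}; this contributes exactly $1$ solution. When $x_3 \neq 0$, I would dehomogenize via $y_i = x_i/x_3$ for $i=1,2$, using that each equation is homogeneous in its own degree. For each of the $p^m-1$ choices of $x_3$ this reduces to counting $(y_1,y_2) \in \mathbb{F}_{p^m}^2$ with
\begin{equation*}
y_1^2 + y_2^2 = 1,\qquad y_1^{d_1} + y_2^{d_1} = 1,\qquad y_1^{d_2} + y_2^{d_2} = 1.
\end{equation*}

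Next I would invoke the descent argument of Lemma~4.3 of \cite{20}, applied exactly as in Lemma~\ref{Le:3.10}, to conclude that any solution $(y_1,y_2) \in \mathbb{F}_{p^m}^2$ of this new system already lies in $\mathbb{F}_{p^d}^2$. Because $d \mid k$, on $\mathbb{F}_{p^d}$ the Frobenius $z \mapsto z^{p^k}$ is trivial, so $y_i^{d_1} = y_i^{d_2} = y_i^2$, and all three equations collapse to the single affine conic $y_1^2 + y_2^2 = 1$ over $\mathbb{F}_{p^d}$. Conversely, every $\mathbb{F}_{p^d}$-point on this conic is a solution of the full system.

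Counting the $\mathbb{F}_{p^d}$-points on $y_1^2 + y_2^2 = 1$ is then immediate from Lemma~6.24 of \cite{11}: since $q_0 = p^d \equiv 3 \pmod 4$, the element $-1$ is a non-square in $\mathbb{F}_{p^d}$, so the form $y_1^2 + y_2^2$ is anisotropic, and each nonzero value is represented exactly $p^d+1$ times. Combining the two cases gives
\begin{equation*}
\overline{N_3} = 1 + (p^m-1)(p^d+1) = p^{m+d} + p^m - p^d,
\end{equation*}
as claimed. There is no genuine obstacle beyond verifying that the sign flip $+1 \to -1$ on the right-hand side of the affine conic does not change the count; this is automatic because both $y_1^2+y_2^2=1$ and $y_1^2+y_2^2=-1$ are nonzero level sets of the same anisotropic form, hence have the common cardinality $p^d+1$. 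This explains why $\overline{N_3}$ in Lemma~\ref{3.11} agrees numerically with $N_3$ in Lemma~\ref{Le:3.10}.
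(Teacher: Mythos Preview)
Your proposal is correct and follows exactly the approach the paper intends: the paper's own proof is simply ``similar to the proof of the lemma above, so we omit the details,'' and you have faithfully reproduced the two-case split, the dehomogenization, the descent to $\mathbb{F}_{p^d}^2$ via the method of Lemma~4.3 in \cite{20}, and the conic count from Lemma~6.24 of \cite{11}. Your added remark that the nonzero level sets $y_1^2+y_2^2=1$ and $y_1^2+y_2^2=-1$ have the same cardinality $p^d+1$ neatly explains why $\overline{N_3}=N_3$.
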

\begin{proof}
The proof is similar to the proof of the lemma above, so we omit the details.
\end{proof}
\begin{lemma}\label{Le:3.12}
Let $q_{0} \equiv 3 \pmod 4$ and let $N_{4}$ denote the number of solutions $(x_{1},x_{2},x_{3},x_{4}) \in \mathbb{F}_{p^{m}}^{4}$ of the following system of equations
\begin{equation}
\begin{cases}
x_{1}^{2}+x_{2}^{2}+ x_{3}^{2}+ x_{4}^{2}=0\\
x_{1}^{d_{1}}+x_{2}^{d_{1}}+ x_{3}^{d_{1}}+ x_{4}^{d_{1}}=0\\
x_{1}^{d_{2}}+x_{2}^{d_{2}}+ x_{3}^{d_{2}}+ x_{4}^{d_{2}}=0.
\end{cases}
\end{equation}
Then $N_{4}=1+(p^{m}-1)(p^{d}+1)(2p^{m}-p^{d}+1)$.
\end{lemma}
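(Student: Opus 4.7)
The plan is to emulate the strategy of the proof of Lemma \ref{Le:3.10}, partitioning $N_4$ according to whether $x_4=0$. If $x_4=0$, the four-variable system reduces to the three-variable system of Lemma \ref{Le:3.10} and contributes exactly $N_3=p^{m+d}+p^m-p^d$ solutions. If $x_4\neq 0$, observe that the three defining equations are simultaneously homogeneous (of degrees $2$, $d_1$ and $d_2$) under the common rescaling of all four coordinates, so the substitution $y_i=x_i/x_4$ shows that the count in this case equals $(p^m-1)M$, where $M$ denotes the number of $(y_1,y_2,y_3)\in\mathbb{F}_{p^m}^3$ satisfying
\[
y_1^2+y_2^2+y_3^2=y_1^{d_1}+y_2^{d_1}+y_3^{d_1}=y_1^{d_2}+y_2^{d_2}+y_3^{d_2}=-1.
\]

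I would then split $M$ according to whether $y_3=0$. When $y_3=0$, the resulting two-variable dehomogenized system forces $(y_1,y_2)\in\mathbb{F}_{p^d}^2$ by the argument in the proof of Lemma \ref{Le:3.10} (following Lemma~4.3 of \cite{20}); the three equations then collapse to $y_1^2+y_2^2=-1$ over $\mathbb{F}_{p^d}$, which has $p^d+1$ solutions because $-1$ is a non-square in $\mathbb{F}_{p^d}$.

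The principal difficulty is the subcase $y_3\neq 0$, for which the naive reduction to $\mathbb{F}_{p^d}^3$ is unavailable: the target expression $1+(p^m-1)(p^d+1)(2p^m-p^d+1)$ contains terms such as $2p^{2m+d}$ that cannot come from solutions confined to $\mathbb{F}_{p^d}^3$. I would attack this by introducing $z_i=y_i^{p^k}-y_i$ and using the identity $y_i^{p^k+1}=\tfrac{1}{2}(y_i^{2p^k}+y_i^2-z_i^2)$ together with its $p^{2k}$-analogue to show that, under $\sum_{i=1}^3 y_i^2=-1$, the second and third equations are equivalent to the pair $\sum_{i=1}^3 z_i^2=0$ and $\sum_{i=1}^3 z_i^{p^k+1}=0$. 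To finish, I would count the admissible triples $(z_1,z_2,z_3)\in\mathbb{F}_{p^m}^3$ by quadratic-form analysis (invoking Lemma \ref{Le:2.2} or the diagonal-quadric counts of \cite{11}) and, for each such triple, enumerate the lifts $(y_1,y_2,y_3)$ with $y_i^{p^k}-y_i=z_i$ and $\sum_{i=1}^3 y_i^2=-1$; the latter constitutes an affine quadric over $\mathbb{F}_{p^d}$ on the $\mathbb{F}_{p^d}^3$-torsor of lifts of a fixed $(z_1,z_2,z_3)$.

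The main obstacle is precisely this last step: tracking how the affine quadratic constraint $\sum_{i=1}^3 y_i^2=-1$ interacts with the $\mathbb{F}_{p^d}$-linear fibres of $y\mapsto y^{p^k}-y$, and then summing the resulting contributions cleanly enough to recover the asserted value $1+(p^m-1)(p^d+1)(2p^m-p^d+1)$.
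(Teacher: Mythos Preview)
Your identities are correct: under $\sum y_i^2=-1$ the substitution $z_i=y_i^{p^k}-y_i$ really does convert the second and third equations into $\sum z_i^2=0$ and $\sum z_i^{p^k+1}=0$. But from there the proposal stalls, and you say so yourself. The difficulty you isolate --- summing over admissible triples $(z_1,z_2,z_3)$ lying in the image of $y\mapsto y^{p^k}-y$ and, for each, counting points on an affine quadric over the $\mathbb{F}_{p^d}^3$-torsor of lifts --- is a genuine computation that is not short, and nothing in the paper (Lemma~\ref{Le:2.2} or the diagonal-quadric formulas of \cite{11}) handles it directly, because the image condition on the $z_i$ is an $\mathbb{F}_{p^d}$-linear constraint that must be carried along throughout. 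So as written this is a plan with the hard step missing, not a proof.

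The paper takes an entirely different route that sidesteps this. Instead of the $3+1$ split $x_4=0$ versus $x_4\neq 0$, it splits the four variables into two pairs and writes
\[
N_4=\sum_{(\overline a,\overline b,\overline c)\in\mathbb{F}_{p^m}^3} N_{1(\overline a,\overline b,\overline c)}\,N_{2(\overline a,\overline b,\overline c)},
\]
where $N_{1}$ (resp.\ $N_{2}$) counts $(x_1,x_2)$ with $x_1^e+x_2^e=\overline{\,\cdot\,}$ (resp.\ $x_3^e+x_4^e=-\overline{\,\cdot\,}$) for $e\in\{2,d_1,d_2\}$. The point is that both factors are \emph{two}-variable systems of exactly the type analyzed in Lemma~\ref{4.1} (and its companion Lemma~\ref{4.2}): after normalizing $\overline a$ to $\pm 1$, those lemmas show that $N_{1(b,c)}$ is supported on a curve in the $(b,c)$-plane, equals $p^d+1$ at $(1,1)$, and equals $2(p^d+1)$ at exactly $\frac{p^m-p^d}{2(p^d+1)}$ other points; moreover $N_{1(b,c)}=N_{2(b,c)}$. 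Squaring and summing gives $(p^d+1)^2+\frac{p^m-p^d}{2(p^d+1)}\cdot 4(p^d+1)^2=(p^d+1)(2p^m-p^d+1)$ for each of the $p^m-1$ values of $\overline a\neq 0$, and the $\overline a=0$ term contributes $1$. This $2+2$ decomposition is the key idea you are missing: it reduces the four-variable count to a single two-variable lemma (proved via the parametrization $\theta=x_1-tx_2$ with $t^2=-1$ in $\mathbb{F}_{p^{2m}}$), rather than to a three-variable affine problem over a torsor.
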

\begin{proof}
See Appendix.
\end{proof}
\begin{lemma}\label{Le:3.13}
Let $q_{0} \equiv 3 \pmod 4$ and let $\overline{N_{4}}$ denote the number of solutions $(x_{1},x_{2},x_{3},x_{4}) \in \mathbb{F}_{p^{m}}^{4}$ of the following system of equations
\begin{equation}
\begin{cases}
x_{1}^{2}+x_{2}^{2}+ x_{3}^{2}- x_{4}^{2}=0\\
x_{1}^{d_{1}}+x_{2}^{d_{1}}+ x_{3}^{d_{1}}-x_{4}^{d_{1}}=0\\
x_{1}^{d_{2}}+x_{2}^{d_{2}}+ x_{3}^{d_{2}}- x_{4}^{d_{2}}=0.
\end{cases}
\end{equation}
Then $\overline{N_{4}}=p^{m+2d}+p^{m}-p^{2d}$.
\end{lemma}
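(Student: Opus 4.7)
The plan is to mirror the proof of Lemma \ref{Le:3.10}, splitting on whether $x_4 = 0$. In the case $x_4 = 0$ the system reduces to the defining system of $N_3$ in Lemma \ref{Le:3.10}, so it contributes $p^{m+d} + p^m - p^d$ solutions. In the case $x_4 \neq 0$ I would perform the substitution $y_i = x_i / x_4$ for $i = 1, 2, 3$ and factor $x_4^2$, $x_4^{d_1}$, $x_4^{d_2}$ out of the three equations respectively; after dividing by the (nonzero) scalars, the system becomes the inhomogeneous system
\begin{equation*}
y_1^2 + y_2^2 + y_3^2 = 1, \qquad y_1^{d_1} + y_2^{d_1} + y_3^{d_1} = 1, \qquad y_1^{d_2} + y_2^{d_2} + y_3^{d_2} = 1.
\end{equation*}

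Following the same method used in the proof of Lemma \ref{Le:3.10} (which invokes Lemma 4.3 of \cite{20}), every solution $(y_1, y_2, y_3) \in \mathbb{F}_{p^m}^3$ of this system must in fact lie in $\mathbb{F}_{p^d}^3$. Because $d \mid k$, the Frobenius $x \mapsto x^{p^k}$ acts trivially on $\mathbb{F}_{p^d}$, and so $y^{d_1} = y^{d_2} = y^2$ for every $y \in \mathbb{F}_{p^d}$. Hence over $\mathbb{F}_{p^d}$ the three equations all reduce to the single ternary quadratic equation $y_1^2 + y_2^2 + y_3^2 = 1$, and any solution of the first equation in $\mathbb{F}_{p^d}^3$ automatically solves all three.

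The remaining step is a standard quadratic form count: the number of $(y_1, y_2, y_3) \in \mathbb{F}_{p^d}^3$ satisfying $y_1^2 + y_2^2 + y_3^2 = 1$ equals $p^{2d} + p^d \, \eta_0(-1)$ by the odd-dimensional analogue of Lemma 6.24 of \cite{11}, and since $q_0 = p^d \equiv 3 \pmod 4$ we have $\eta_0(-1) = -1$, giving $p^{2d} - p^d$ solutions. Multiplying by the $p^m - 1$ choices of $x_4 \in \mathbb{F}_{p^m}^*$ and adding the Case I contribution yields
\begin{equation*}
\overline{N_4} = (p^{m+d} + p^m - p^d) + (p^m - 1)(p^{2d} - p^d) = p^{m+2d} + p^m - p^{2d},
\end{equation*}
as claimed. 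The only non-routine step is the descent of $(y_1, y_2, y_3)$ into $\mathbb{F}_{p^d}^3$; once that is in hand, the three high-degree equations collapse to a single quadratic equation and the count follows directly from a standard character-sum formula.
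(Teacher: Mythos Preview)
Your approach is different from the paper's and, once completed, cleaner. The paper does \emph{not} mirror the proof of Lemma~\ref{Le:3.10}; instead it writes
\[
\overline{N_4}=\sum_{(\overline{a},\overline{b},\overline{c})\in\mathbb{F}_{p^m}^3} N_{1(\overline{a},\overline{b},\overline{c})}\,N_{3(\overline{a},\overline{b},\overline{c})},
\]
where $N_1$ counts pairs $(x_1,x_2)$ with prescribed values of $x_1^2+x_2^2,\ x_1^{d_1}+x_2^{d_1},\ x_1^{d_2}+x_2^{d_2}$, and $N_3$ counts pairs $(x_3,x_4)$ for the ``minus'' system. It then uses the explicit parametrisation of Lemmas~\ref{4.1} and~\ref{4.3} (via $\theta\in\mathbb{F}_{p^{2m}}$ with $\theta^{p^m+1}=1$) to see that $N_1$ and $N_3$ are simultaneously nonzero only at $(b,c)=(1,1)$, giving $(p^d+1)(p^d-1)$ per nonzero $\overline{a}$ and $2p^m-1$ from $\overline{a}=0$. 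Your route avoids this machinery entirely.

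There is, however, a genuine gap at the step you flag as ``the only non-routine step''. The two-variable descent in Lemma~\ref{Le:3.10} (equivalently Lemma~4.3 of \cite{20}) comes down to the Lagrange identity
\[
\Bigl(\sum_i y_i^2\Bigr)\Bigl(\sum_i y_i^{2p^k}\Bigr)-\Bigl(\sum_i y_i^{p^k+1}\Bigr)^2
=\sum_{i<j}\bigl(y_i y_j^{p^k}-y_j y_i^{p^k}\bigr)^2,
\]
which under your hypotheses equals $1\cdot 1-1=0$. For two variables the right-hand side is a single square and one concludes $y_1/y_2\in\mathbb{F}_{p^d}$ immediately. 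For three variables it is a sum of three squares over $\mathbb{F}_{p^m}$, and that sum vanishing does \emph{not} force each term to vanish; so ``the same method'' does not close the argument. You must use the third equation as well. One clean way: set $\mathbf{a}=(y_i)$, $\mathbf{b}=(y_i^{p^k})$, $\mathbf{c}=(y_i^{p^{2k}})$ and observe that all six pairwise dot products $\mathbf{a}\cdot\mathbf{a},\ldots,\mathbf{b}\cdot\mathbf{c}$ equal $1$, so $\mathbf{a}-\mathbf{b}$ and $\mathbf{a}-\mathbf{c}$ span a totally isotropic subspace of $(\mathbb{F}_{p^m}^3,\,y_1^2+y_2^2+y_3^2)$; since the Witt index of this ternary form is $1$, these differences are proportional. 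If $\mathbf{a}=\mathbf{b}$ you are done. Otherwise the isotropic direction descends to some $\mathbf{w}\in\mathbb{F}_{p^d}^3$ with $\mathbf{w}\cdot\mathbf{w}=0$, and writing $\mathbf{a}=\alpha\mathbf{u}+\beta\mathbf{w}$ with $\mathbf{u}\in\mathbf{w}^\perp\cap\mathbb{F}_{p^d}^3$ one finds $\alpha^2(\mathbf{u}\cdot\mathbf{u})=1$; but $\mathbf{u}\cdot\mathbf{u}$ lies in the square class of $-1$ (the anisotropic kernel of the ternary form), which is a nonsquare in $\mathbb{F}_{p^m}$ because $q_0\equiv 3\pmod 4$ and $s$ is odd. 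Hence no such $\alpha$ exists, ruling this case out. With this addition your proof is complete and considerably shorter than the paper's.
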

\begin{proof}
See Appendix.
\end{proof}
\begin{lemma}\label{Le:3.14}
Let $q_{0} \equiv 3 \pmod 4$ and let $\widetilde{N_{4}}$ denote the number of solutions $(x_{1},x_{2},x_{3},x_{4}) \in \mathbb{F}_{p^{m}}^{4}$ of the following system of equations
\begin{equation}
\begin{cases}
x_{1}^{2}+x_{2}^{2}- x_{3}^{2}- x_{4}^{2}=0\\
x_{1}^{d_{1}}+x_{2}^{d_{1}}- x_{3}^{d_{1}}- x_{4}^{d_{1}}=0\\
x_{1}^{d_{2}}+x_{2}^{d_{2}}- x_{3}^{d_{2}}- x_{4}^{d_{2}}=0.
\end{cases}
\end{equation}
Then $\widetilde{N_{4}}=1+(p^{m}-1)(p^{d}+1)(2p^{m}-p^{d}+1)$.
\end{lemma}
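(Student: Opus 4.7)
The plan is to bypass a direct enumeration by establishing the identity $\widetilde{N_{4}}=N_{4}$ via an additive-character reformulation, and then invoking Lemma~\ref{Le:3.12}. For $(a,b,c)\in\mathbb{F}_{p^{m}}^{3}$, introduce the exponential sum
\[S(a,b,c)\;:=\;\sum_{x\in\mathbb{F}_{p^{m}}}\zeta_{p}^{Tr^{p^{m}}_{p}(ax^{2}+bx^{d_{1}}+cx^{d_{2}})}.\]
Orthogonality of the canonical additive character of $\mathbb{F}_{p^{m}}$, together with the observation that both defining systems are of the shape $\sum_{i=1}^{4}\varepsilon_{i}\,g(x_{i})=0$ for $g(x)=ax^{2}+bx^{d_{1}}+cx^{d_{2}}$ (with all $\varepsilon_{i}=+1$ for the $N_{4}$ system and $(\varepsilon_{1},\varepsilon_{2},\varepsilon_{3},\varepsilon_{4})=(+,+,-,-)$ for the $\widetilde{N_{4}}$ system), would yield
\[N_{4}=\frac{1}{p^{3m}}\sum_{(a,b,c)}S(a,b,c)^{4},\qquad\widetilde{N_{4}}=\frac{1}{p^{3m}}\sum_{(a,b,c)}S(a,b,c)^{2}\,\overline{S(a,b,c)}^{\,2},\]
using $\sum_{x}\zeta_{p}^{-Tr^{p^{m}}_{p}(g(x))}=\overline{S(a,b,c)}$. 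It therefore suffices to prove that $S(a,b,c)^{2}$ is real for every triple $(a,b,c)$, for then $\overline{S}^{\,2}=S^{2}$ pointwise, the two displayed sums agree termwise, and the value of $\widetilde{N_{4}}$ can be read off from Lemma~\ref{Le:3.12}.

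The real-ness of $S(a,b,c)^{2}$ is the only substantive step, and it follows directly from Lemma~\ref{Le:2.2}. The case $(a,b,c)=(0,0,0)$ is trivial since then $S=p^{m}$. Otherwise, identifying $\mathbb{F}_{p^{m}}$ with $\mathbb{F}_{q_{0}}^{s}$ gives $S(a,b,c)=\sum_{X\in\mathbb{F}_{q_{0}}^{s}}\zeta_{p}^{Tr^{q_{0}}_{p}(Q_{a,b,c}(X))}$; by Lemma~\ref{Le:2.3} the rank $r$ of $Q_{a,b,c}$ satisfies $r\geq s-4\geq 1$, and under the standing hypothesis $q_{0}\equiv 3\pmod{4}$ (which forces $p\equiv 3\pmod{4}$ and $d$ odd) Lemma~\ref{Le:2.2} evaluates
\[S(a,b,c)=\eta_{0}(\triangle)(\sqrt{-1})^{dr}(-1)^{(d-1)r}q_{0}^{s-r/2}.\]
Squaring the root-of-unity prefactor then gives $S(a,b,c)^{2}=(-1)^{dr}q_{0}^{2s-r}=(-1)^{r}q_{0}^{2s-r}\in\mathbb{Z}$, as required. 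Consequently $\widetilde{N_{4}}=N_{4}=1+(p^{m}-1)(p^{d}+1)(2p^{m}-p^{d}+1)$ by Lemma~\ref{Le:3.12}.

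There is essentially no obstacle once the character-sum viewpoint is chosen; the only piece of algebra is the observation that the prefactor in Lemma~\ref{Le:2.2} has square $\pm 1$. A direct four-variable enumeration in the spirit of Lemmas~\ref{Le:3.10}--\ref{3.11} would be considerably more painful, because after normalizing by $x_{4}\neq 0$ one obtains three equations in three unknowns (exactly determined rather than over-determined), so the key reduction used in the three-variable lemmas---forcing the nonzero solutions to lie in $\mathbb{F}_{p^{d}}^{3}$ and then counting points on a single conic over $\mathbb{F}_{p^{d}}$---is not available, and further subdivision of cases would be needed. The merit of the character-sum route is that it bypasses this more intricate enumeration by matching $\widetilde{N_{4}}$ directly with the already computed $N_{4}$.
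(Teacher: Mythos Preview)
Your argument is correct and takes a genuinely different route from the paper. The paper observes that the system for $\widetilde{N_{4}}$ decouples into two identical two-variable systems, so that
\[
\widetilde{N_{4}}=\sum_{(\overline{a},\overline{b},\overline{c})\in\mathbb{F}_{p^{m}}^{3}} N_{1(\overline{a},\overline{b},\overline{c})}^{2},
\]
where $N_{1(\overline{a},\overline{b},\overline{c})}$ is the number of solutions of $x_{1}^{2}+x_{2}^{2}=\overline{a}$, $x_{1}^{d_{1}}+x_{2}^{d_{1}}=\overline{b}$, $x_{1}^{d_{2}}+x_{2}^{d_{2}}=\overline{c}$; it then evaluates this sum of squares using the explicit distribution of $N_{1}$ established in Lemma~\ref{4.1} (the same ingredient used for Lemma~\ref{Le:3.12}, where one had $N_{4}=\sum N_{1}N_{2}$ with $N_{1}=N_{2}$). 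You instead express both $N_{4}$ and $\widetilde{N_{4}}$ as fourth moments of the single sum $S(a,b,c)$ and deduce $\widetilde{N_{4}}=N_{4}$ from the fact that $S(a,b,c)^{2}\in\mathbb{R}$, which you read off Lemma~\ref{Le:2.2}. Both routes ultimately yield the identity $\widetilde{N_{4}}=N_{4}$; the paper's fiber-counting reuses the fine structure of the two-variable solution sets, while your character-sum argument is shorter and uses only the coarse fact that the quadratic Gauss-type sum has real square.
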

\begin{proof}
See Appendix.
\end{proof}
Now we are ready to prove Theorem \ref{Th:3.7} in the case of $q_{0} \equiv 3 \pmod 4$.

\noindent $\mathbf{Proof}$ $\mathbf{of}$ $\mathbf{Theorem}$ \ref{Th:3.7}.

\noindent It is clear that $T(a,b,c)=2p^{m}$ if $(a,b,c)=(0,0,0)$. Otherwise, by Lemma \ref{Le:2.3} and \ref{Le:3.2}, we have
\[T(a,b,c) \in \{0, \pm 2p^{\frac{m+d}{2}}, \pm 2p^{\frac{m+3d}{2}}\}.\]
We define
\[n_{1,i}=\#\{(a,b,c) \in \mathbb{F}_{p^{m}}^{3}: T(a,b,c)=(-1)^{i}2p^{\frac{m+d}{2}}\},\]
\[n_{2,i}=\#\{(a,b,c) \in \mathbb{F}_{p^{m}}^{3}: T(a,b,c)=(-1)^{i}2p^{\frac{m+3d}{2}}\},\]
where $i=0,1$. Then we immediately have
\begin{equation}\label{Eq:3.10}
\begin{cases}
\sum_{(a,b,c) \in \mathbb{F}_{p^{m}}^{3}}T(a,b,c)=2p^{m}+2(n_{1,0}-n_{1,1})p^{\frac{m+d}{2}}+2(n_{2,0}-n_{2,1})p^{\frac{m+3d}{2}}\\
\sum_{(a,b,c) \in \mathbb{F}_{p^{m}}^{3}}T^{2}(a,b,c)=2^{2}p^{2m}+2^{2}(n_{1,0}+n_{1,1})p^{m+d}+2^{2}(n_{2,0}+n_{2,1})p^{m+3d}\\
\sum_{(a,b,c) \in \mathbb{F}_{p^{m}}^{3}}T^{3}(a,b,c)=2^{3}p^{3m}+2^{3}(n_{1,0}-n_{1,1})p^{\frac{3m+3d}{2}}+2^{3}(n_{2,0}-n_{2,1})p^{\frac{3m+9d}{2}}\\
\sum_{(a,b,c) \in \mathbb{F}_{p^{m}}^{3}}T^{4}(a,b,c)=2^{4}p^{4m}+2^{4}(n_{1,0}+n_{1,1})p^{2m+2d}+2^{4}(n_{2,0}+n_{2,1})p^{2m+6d}.
\end{cases}
\end{equation}
On the other hand, it follows from Eq. (\ref{Eq:3.3}) that
\begin{equation}\label{Eq:3.13}
\begin{split}
&\sum_{(a,b,c) \in \mathbb{F}_{p^{m}}^{3}}T(a,b,c)\\
&=\sum_{(a,b,c) \in \mathbb{F}_{p^{m}}^{3}}(\sum_{x \in \mathbb{F}_{p^{m}}}\zeta_{p}^{Tr(ax^{2}+bx^{d_{1}}+cx^{d_{2}})}+\sum_{x \in \mathbb{F}_{p^{m}}}\zeta_{p}^{Tr(-ax^{2}-bx^{d_{1}}-cx^{d_{2}})})\\
&=\sum_{x \in \mathbb{F}_{p^{m}}}\sum_{a \in \mathbb{F}_{p^{m}}}\zeta_{p}^{Tr(ax^{2})}\sum_{b \in \mathbb{F}_{p^{m}}}\zeta_{p}^{Tr(bx^{d_{1}})}\sum_{c \in \mathbb{F}_{p^{m}}}\zeta_{p}^{Tr(cx^{d_{2}})}\\
&\relphantom{=} {}+\sum_{x \in \mathbb{F}_{p^{m}}}\sum_{a \in \mathbb{F}_{p^{m}}}\zeta_{p}^{Tr(-ax^{2})}\sum_{b \in \mathbb{F}_{p^{m}}}\zeta_{p}^{Tr(-bx^{d_{1}})}\sum_{c \in \mathbb{F}_{p^{m}}}\zeta_{p}^{Tr(-cx^{d_{2}})}\\
&=2p^{3m}.
\end{split}
\end{equation}
\begin{equation*}
\begin{split}
&\sum_{(a,b,c) \in \mathbb{F}_{p^{m}}^{3}}T^{2}(a,b,c)\\
&=\sum_{(x_{1},x_{2}) \in \mathbb{F}_{p^{m}}^{2}}\sum_{a \in \mathbb{F}_{p^{m}}}\zeta_{p}^{Tr(a(x_{1}^{2}+x_{2}^{2}))}\sum_{b \in \mathbb{F}_{p^{m}}}\zeta_{p}^{Tr(b(x_{1}^{d_{1}}+x_{2}^{d_{1}}))}\sum_{c \in \mathbb{F}_{p^{m}}}\zeta_{p}^{Tr(c(x_{1}^{d_{2}}+x_{2}^{d_{2}}))}\\
&\relphantom{=} {}+\sum_{(x_{1},x_{2}) \in \mathbb{F}_{p^{m}}^{2}}\sum_{a \in \mathbb{F}_{p^{m}}}\zeta_{p}^{Tr(a(x_{1}^{2}-x_{2}^{2}))}\sum_{b \in \mathbb{F}_{p^{m}}}\zeta_{p}^{Tr(b(x_{1}^{d_{1}}-x_{2}^{d_{1}}))}\sum_{c \in \mathbb{F}_{p^{m}}}\zeta_{p}^{Tr(c(x_{1}^{d_{2}}-x_{2}^{d_{2}}))}\\
&\relphantom{=} {}+\sum_{(x_{1},x_{2}) \in \mathbb{F}_{p^{m}}^{2}}\sum_{a \in \mathbb{F}_{p^{m}}}\zeta_{p}^{Tr(a(-x_{1}^{2}+x_{2}^{2}))}\sum_{b \in \mathbb{F}_{p^{m}}}\zeta_{p}^{Tr(b(-x_{1}^{d_{1}}+x_{2}^{d_{1}}))}\sum_{c \in \mathbb{F}_{p^{m}}}\zeta_{p}^{Tr(c(-x_{1}^{d_{2}}+x_{2}^{d_{2}}))}\\
&\relphantom{=} {}+\sum_{(x_{1},x_{2}) \in \mathbb{F}_{p^{m}}^{2}}\sum_{a \in \mathbb{F}_{p^{m}}}\zeta_{p}^{Tr(-a(x_{1}^{2}+x_{2}^{2}))}\sum_{b \in \mathbb{F}_{p^{m}}}\zeta_{p}^{Tr(-b(x_{1}^{d_{1}}+x_{2}^{d_{1}}))}\sum_{c \in \mathbb{F}_{p^{m}}}\zeta_{p}^{Tr(-c(x_{1}^{d_{2}}+x_{2}^{d_{2}}))}\\
&=p^{3m}(\#S_{1}+\#S_{2}+\#S_{3}+\#S_{4}),
\end{split}
\end{equation*}
where
\begin{equation*}
\begin{split}
&S_{1}=\{(x_{1},x_{2}) \in \mathbb{F}_{p^{m}}^{2}: x_{1}^{2}+x_{2}^{2}=0, x_{1}^{d_{1}}+x_{2}^{d_{1}}=0,x_{1}^{d_{2}}+x_{2}^{d_{2}}=0\},\\
&S_{2}=\{(x_{1},x_{2}) \in \mathbb{F}_{p^{m}}^{2}: x_{1}^{2}-x_{2}^{2}=0, x_{1}^{d_{1}}-x_{2}^{d_{1}}=0,x_{1}^{d_{2}}-x_{2}^{d_{2}}=0\},\\
&S_{3}=\{(x_{1},x_{2}) \in \mathbb{F}_{p^{m}}^{2}: -x_{1}^{2}+x_{2}^{2}=0, -x_{1}^{d_{1}}+x_{2}^{d_{1}}=0,-x_{1}^{d_{2}}+x_{2}^{d_{2}}=0\},\\
&S_{4}=\{(x_{1},x_{2}) \in \mathbb{F}_{p^{m}}^{2}: -x_{1}^{2}-x_{2}^{2}=0, -x_{1}^{d_{1}}-x_{2}^{d_{1}}=0,-x_{1}^{d_{2}}-x_{2}^{d_{2}}=0\}.\\
\end{split}
\end{equation*}
It is clear that $S_{1}=S_{4}$ and $S_{2}=S_{3}$. Then by Lemma \ref{Le:3.9} and \ref{Le:3.8}, we have
\begin{equation}\label{Eq:3.14}
\sum_{(a,b,c) \in \mathbb{F}_{p^{m}}^{3}}T^{2}(a,b,c)=4p^{4m}.
\end{equation}
Similarly, by Lemmas \ref{Le:3.10}-\ref{Le:3.14}, we have
\begin{equation}\label{Eq:3.15}
\begin{split}
&\sum_{(a,b,c) \in \mathbb{F}_{p^{m}}^{3}}T^{3}(a,b,c)=8p^{3m}(p^{m+d}+p^{m}-p^{d})\\
&\sum_{(a,b,c) \in \mathbb{F}_{p^{m}}^{3}}T^{4}(a,b,c)=16p^{4m}(p^{m+d}+p^{m}-p^{d}).
\end{split}
\end{equation}
Combining Eqs. (\ref{Eq:3.10})-(\ref{Eq:3.15}), we get
\begin{equation*}
\begin{split}
&n_{1,0}=\frac{(p^{m+d}+p^{(m+3d)/2})(p^{2m}-p^{2m-2d}-p^{2m-3d}+p^{m-2d}+p^{m-3d}-1)}{2(p^{2d}-1)},\\
&n_{1,1}=\frac{(p^{m+d}-p^{(m+3d)/2})(p^{2m}-p^{2m-2d}-p^{2m-3d}+p^{m-2d}+p^{m-3d}-1)}{2(p^{2d}-1)},\\
&n_{2,0}=\frac{(p^{m-3d}+p^{(m-3d)/2})(p^{m}-1)(p^{m-d}-1)}{2(p^{2d}-1)},\\
&n_{2,1}=\frac{(p^{m-3d}-p^{(m-3d)/2})(p^{m}-1)(p^{m-d}-1)}{2(p^{2d}-1)}.
\end{split}
\end{equation*}
Summarizing the discussion above completes the proof of this theorem in the case of $q_{0} \equiv 3 \pmod 4$.

\begin{table}[htbp]
\caption{Weight Distribution of $\mathcal{C}$}
\centering
\begin{tabular}{ll}
 \hline
 Weight& Frequency\\
 \hline
 $0$ &1\\
 $(p^{t}-1)p^{m-t}$ & $(p^{m}-1)(p^{2m}-p^{2m-d}+p^{2m-4d}+p^{m}-p^{m-d}-p^{m-3d}+1)$\\
 $(p^{t}-1)(p^{m-t}-p^{\frac{m+d-2t}{2}})$ &  $\frac{(p^{m+d}+p^{(m+3d)/2})(p^{2m}-p^{2m-2d}-p^{2m-3d}+p^{m-2d}+p^{m-3d}-1)}{2(p^{2d}-1)}$\\
 $(p^{t}-1)(p^{m-t}+p^{\frac{m+d-2t}{2}})$ &  $\frac{(p^{m+d}-p^{(m+3d)/2})(p^{2m}-p^{2m-2d}-p^{2m-3d}+p^{m-2d}+p^{m-3d}-1)}{2(p^{2d}-1)}$\\
 $(p^{t}-1)(p^{m-t}-p^{\frac{m+3d-2t}{2}})$ & $\frac{(p^{m-3d}+p^{(m-3d)/2})(p^{m}-1)(p^{m-d}-1)}{2(p^{2d}-1)}$\\
 $(p^{t}-1)(p^{m-t}+p^{\frac{m+3d-2t}{2}})$ & $\frac{(p^{m-3d}-p^{(m-3d)/2})(p^{m}-1)(p^{m-d}-1)}{2(p^{2d}-1)}$\\
 \hline
\end{tabular}
\end{table}
Recall that $\mathcal{C}$ is the  cyclic code over $\mathbb{F}_{p^{t}}$ with parity check polynomial $h_{0}(x)h_{1}(x)h_{2}(x)$, where $h_{0}(x)$, $h_{1}(x)$, $h_{2}(x)$ are the minimal polynomial of $\pi^{-2}$, $\pi^{-(p^{k}+1)}$ and $\pi^{-(p^{2k}+1)}$ over $\mathbb{F}_{p^{t}}$, respectively.
\begin{theorem}\label{Th:3.11}
Let $m$ and $k$ be any two  positive integers such that $s=m/d \geq 5$ is odd, where $d=$ gcd$(m,k)$.  Let $t$ be a divisor of $d$ such that $d/t$ is odd, then
$\mathcal{C}$ is a  cyclic code over $\mathbb{F}_{p^{t}}$ with parameters $[p^{m}-1, 3m_{0}, (p^{t}-1)(p^{m-t}-p^{\frac{m+3d-2t}{2}})]$, where $m_{0}=m/t$. Moreover, the weight distribution of $\mathcal{C}$ is given in Table 2.
\end{theorem}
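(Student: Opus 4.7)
The plan is to derive Theorem \ref{Th:3.11} as a direct corollary of Theorem \ref{Th:3.7} combined with the weight formula established in (\ref{Eq:3.1}). I would begin by confirming the length and dimension: $\mathcal{C}$ has length $p^{m}-1$ by construction, and its dimension equals $\deg(h_{0}h_{1}h_{2}) = 3m_{0}$ because the three minimal polynomials are pairwise distinct of degree $m_{0}$, as remarked at the beginning of Section 3. Consequently $|\mathcal{C}| = (p^{t})^{3m_{0}} = p^{3m} = |\mathbb{F}_{q}^{3}|$, so the parametrization $(a,b,c) \mapsto \mathbf{c}_{(a,b,c)}$ is a bijection $\mathbb{F}_{q}^{3} \to \mathcal{C}$. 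This ensures that the frequency of each weight in $\mathcal{C}$ equals the frequency of the corresponding value of $T(a,b,c)$ listed in Table 1.

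The second step is pure substitution. For each of the six possible values of $T(a,b,c)$, the identity
\[
W(\mathbf{c}_{(a,b,c)}) = p^{m-t}(p^{t}-1) - \tfrac{p^{t}-1}{2p^{t}}\,T(a,b,c)
\]
produces the corresponding weight. In particular, $T=2p^{m}$ (which forces $(a,b,c)=(0,0,0)$) gives $W=0$; $T=0$ gives $W=(p^{t}-1)p^{m-t}$; and $T=\pm 2p^{(m+d)/2}$ (respectively $T=\pm 2p^{(m+3d)/2}$) gives $W=(p^{t}-1)(p^{m-t}\mp p^{(m+d-2t)/2})$ (respectively $(p^{t}-1)(p^{m-t}\mp p^{(m+3d-2t)/2})$). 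The corresponding frequencies are inherited verbatim from Table 1, yielding Table 2.

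The minimum distance is then the smallest nonzero weight. Among the five nonzero weight values, the largest subtracted correction is $p^{(m+3d-2t)/2}$, so $d_{\min} = (p^{t}-1)(p^{m-t}-p^{(m+3d-2t)/2})$, matching the parameter listed in the theorem.

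I anticipate no serious obstacle, since all the substantive analysis is already packaged in Theorem \ref{Th:3.7} and the derivation here amounts to algebraic bookkeeping. The only minor verifications worth carrying out are that the exponents $(m+d-2t)/2$ and $(m+3d-2t)/2$ are indeed integers, which follows from the standing hypotheses that $s=m/d$ is odd and $d/t$ is odd (forcing $m$, $d$, $t$ to share a common parity), and that the sign conventions in passing from $T$ to $W$ line up correctly with Table 2.
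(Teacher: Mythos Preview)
Your proposal is correct and follows essentially the same approach as the paper's own proof, which simply cites the discussion at the beginning of Section~3 for the length and dimension, and then invokes Eq.~(\ref{Eq:3.1}) together with Theorem~\ref{Th:3.7} to obtain the weight distribution and minimum distance. You have merely spelled out the bookkeeping (the bijection $(a,b,c)\mapsto\mathbf{c}_{(a,b,c)}$, the explicit substitution of each value of $T$ into the weight formula, and the integrality check for the exponents) that the paper leaves implicit.
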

\begin{proof}
According to the discussion in the beginning of this section,
the length and dimension of $\mathcal{C}$ are clear. Furthermore,
 the minimum Hamming distance and weight distribution of $\mathcal{C}$ follows from Eq. (\ref{Eq:3.1}) and Theorem \ref{Th:3.7}.
\end{proof}
Finally, we give an example to verify the results in Table 2. But the experiment for the case $d\neq 1$  is beyond our computation ability.
\begin{example}
Let $p=3$, $m=7$ and $k=1$. Then the code $\mathcal{C}$ is a $[2186,21,1296]$ cyclic code over $\mathbb{F}_{3}$ with weight enumerator
\[
\begin{split}
&1+8951670z^{1296}+1732767876z^{1404}+7102473578z^{1458}+1608998742z^{1512}\\
&+7161336z^{1620}.
\end{split}
\]
which is completely in agreement with the results presented in Table 2.
\end{example}
\begin{center}
    APPENDIX
\end{center}
\noindent $\mathbf{Proof}$ $\mathbf{of}$ $\mathbf{Lemma}$ \ref{Le:3.12}.

\noindent
For any $(\overline{a},\overline{b},\overline{c}) \in \mathbb{F}_{p^{m}}^{3}$, let $N_{1(\overline{a},\overline{b},\overline{c})}$ and $N_{2(\overline{a},\overline{b},\overline{c})}$ denote the number of solutions  of the following two systems of equations
\begin{equation}\label{4.16}
\begin{cases}
x_{1}^{2}+x_{2}^{2}=\overline{a}\\
x_{1}^{d_{1}}+x_{2}^{d_{1}}=\overline{b}\\
x_{1}^{d_{2}}+x_{2}^{d_{2}}=\overline{c}\\
\end{cases}
\end{equation}
\begin{equation}\label{4.17}
\begin{cases}
x_{3}^{2}+x_{4}^{2}=-\overline{a}\\
x_{3}^{d_{1}}+x_{4}^{d_{1}}=-\overline{b}\\
x_{3}^{d_{2}}+x_{4}^{d_{2}}=-\overline{c}.\\
\end{cases}
\end{equation}
Then we have
 \[N_{4}=\sum_{(\overline{a},\overline{b},\overline{c}) \in \mathbb{F}_{p^{m}}^{3}}N_{1(\overline{a},\overline{b},\overline{c})}N_{2(\overline{a},\overline{b},\overline{c})}.\]
\noindent Case 1, when $\overline{a}=0$. In this case, (\ref{4.16}) and (\ref{4.17}) have solutions if and only if $\overline{b}=\overline{c}=0$  since $-1$ is a nonsquare. Moreover, $N_{1(0,0,0)}=N_{2(0,0,0)}=1$.

\noindent Case 2, when $\overline{a}\neq 0$. In this case,  if $\overline{b}=0$ or $\overline{c}=0$, neither (\ref{4.16}) nor (\ref{4.17}) has solutions. So in the following, we consider the problem only when $\overline{b}\neq 0$ and $\overline{c}\neq 0$.
\begin{itemize}
  \item $\overline{a}$ is a nonzero square in $\mathbb{F}_{p^{m}}$, $\overline{b}\neq 0$ and $\overline{c}\neq 0$.
 In this case, for any fixed $\overline{a}$, (\ref{4.16}) has the same number of  solutions as
\begin{equation}\label{4.18}
\begin{cases}
x_{1}^{2}+x_{2}^{2}=1\\
x_{1}^{d_{1}}+x_{2}^{d_{1}}=b\\
x_{1}^{d_{2}}+x_{2}^{d_{2}}=c\\
\end{cases}
\end{equation}
and  (\ref{4.17}) has the same number of  solutions as
\begin{equation}\label{4.19}
\begin{cases}
x_{3}^{2}+x_{4}^{2}=-1\\
x_{3}^{d_{1}}+x_{4}^{d_{1}}=-b\\
x_{3}^{d_{2}}+x_{4}^{d_{2}}=-c,\\
\end{cases}
\end{equation}
where $b=\overline{b}/\overline{a}^{\frac{d_{1}}{2}}$ and $c=\overline{c}/\overline{a}^{\frac{d_{2}}{2}}$. Clearly, $(b,c)$ runs through $\mathbb{F}_{p^{m}}^{*2}$ as $(\overline{b},\overline{c})$ does. According to the proofs of  Lemma \ref{4.1} and \ref{4.2},  we can get $N_{1(1,b,c)}=N_{2(1,b,c)}$ for any fixed $(b,c) \in \mathbb{F}_{p^{m}}^{*2}$ . If (\ref{4.18}) has solutions, then $N_{1(1,b,c)}=p^d+1$ or $2(p^d+1)$. Furthermore, only in the case of $(b,c)=(1,1)$, $N_{1(1,b,c)}=p^d+1$, and there are $\frac{p^{m}-p^d}{2(p^d+1)}$ pairs of $(b,c)$ such that $N_{1(1,b,c)}=2(p^d+1)$. Therefore, for any fixed nonzero square $\overline{a}$, we have
\begin{equation*}
\begin{split}
&\sum_{(\overline{b},\overline{c}) \in \mathbb{F}_{p^{m}}^{*2}}N_{1(\overline{a},\overline{b},\overline{c})}N_{2(\overline{a},\overline{b},\overline{c})}\\
&=(p^d+1)^{2}+(2(p^d+1))^{2}\frac{p^{m}-p^d}{2(p^d+1)}\\
&=(p^d+1)(2p^{m}-p^d+1).
\end{split}
\end{equation*}
  \item $\overline{a}$ is a non-square in $\mathbb{F}_{p^{m}}$.
 In this case, for any fixed $\overline{a}$,  (\ref{4.16}) has the same number of  solutions as
\begin{equation*}
\begin{cases}
x_{1}^{2}+x_{2}^{2}=-1\\
x_{1}^{d_{1}}+x_{2}^{d_{1}}=-b\\
x_{1}^{d_{2}}+x_{2}^{d_{2}}=-c\\
\end{cases}
\end{equation*}
and  equation system (\ref{4.17}) has the same number of  solutions as
\begin{equation*}
\begin{cases}
x_{3}^{2}+x_{4}^{2}=1\\
x_{3}^{d_{1}}+x_{4}^{d_{1}}=b\\
x_{3}^{d_{2}}+x_{4}^{d_{2}}=c.\\
\end{cases}
\end{equation*}
It can be easily seen that this case is equivalent to the case when $\overline{a}$ is a nonzero square. So for any fixed nonsquare $\overline{a}$, we also have
\begin{equation*}
\begin{split}
&\sum_{(\overline{b},\overline{c}) \in \mathbb{F}_{p^{m}}^{*2}}N_{1(\overline{a},\overline{b},\overline{c})}N_{2(\overline{a},\overline{b},\overline{c})}\\
&=(p^d+1)(2p^{m}-p^d+1).
\end{split}
\end{equation*}
\end{itemize}
Summarizing  the two cases above, we have $N_{4}=1+(p^{m}-1)(p^d+1)(2p^{m}-p^d+1)$.

\noindent$\mathbf{Proof}$ $\mathbf{of}$ $\mathbf{Lemma}$ \ref{Le:3.13}.

\noindent
For any $(\overline{a},\overline{b},\overline{c}) \in \mathbb{F}_{p^{m}}^{3}$, let $N_{1(\overline{a},\overline{b},\overline{c})}$ and $N_{3(\overline{a},\overline{b},\overline{c})}$ denote the number of solutions  of the following two system of equations
\begin{equation}\label{4.28}
\begin{cases}
x_{1}^{2}+x_{2}^{2}=\overline{a}\\
x_{1}^{d_{1}}+x_{2}^{d_{1}}=\overline{b}\\
x_{1}^{d_{2}}+x_{2}^{d_{2}}=\overline{c}\\
\end{cases}
\end{equation}
\begin{equation}\label{4.29}
\begin{cases}
x_{3}^{2}-x_{4}^{2}=-\overline{a}\\
x_{3}^{d_{1}}-x_{4}^{d_{1}}=-\overline{b}\\
x_{3}^{d_{2}}-x_{4}^{d_{2}}=-\overline{c}.\\
\end{cases}
\end{equation}
It is then obvious that \[\overline{N_{4}}=\sum_{(\overline{a},\overline{b},\overline{c}) \in \mathbb{F}_{p^{m}}^{3}}N_{1(\overline{a},\overline{b},\overline{c})}N_{3(\overline{a},\overline{b},\overline{c})}.\]

\noindent Case 1, when $\overline{a}=0$.
In this case, (\ref{4.28}) have solutions if and only if $\overline{b}=\overline{c}=0$  since $-1$ is a non-square. Moreover, $N_{1(0,0,0)}=1$ and $N_{3(0,0,0)}=2p^{m}-1$.

\noindent Case 2, when $\overline{a}\neq 0$. In this case,  if $\overline{b}=0$ or $\overline{c}=0$,  (\ref{4.28}) has no solution. So in the following, we consider this problem only when $\overline{b}\neq 0$ and $\overline{c}\neq 0$.
\begin{itemize}
  \item when $\overline{a}$ is a nonzero square, $\overline{b}\neq 0$ and $\overline{c}\neq 0$. In this case, for any fixed $\overline{a}\neq 0$, equation system (\ref{4.28}) has the same number of  solutions as
\begin{equation}\label{4.30}
\begin{cases}
x_{1}^{2}+x_{2}^{2}=1\\
x_{1}^{d_{1}}+x_{2}^{d_{1}}=b\\
x_{1}^{d_{2}}+x_{2}^{d_{2}}=c\\
\end{cases}
\end{equation}
and  equation system (\ref{4.29}) has the same number of  solutions as
\begin{equation}\label{4.31}
\begin{cases}
x_{3}^{2}-x_{4}^{2}=-1\\
x_{3}^{d_{1}}-x_{4}^{d_{1}}=-b\\
x_{3}^{d_{2}}-x_{4}^{d_{2}}=-c,\\
\end{cases}
\end{equation}
where $b=\overline{b}/\overline{a}^{\frac{d_{1}}{2}}$ and $c=\overline{c}/\overline{a}^{\frac{d_{2}}{2}}$.  Then $(b,c)$ runs through $\mathbb{F}_{p^{m}}^{*2}$ as $(\overline{b},\overline{c})$ runs through $\mathbb{F}_{p^{m}}^{*2}$. According to the proofs of Lemma \ref{4.1} and \ref{4.3}, in order to guarantee (\ref{4.30}) and (\ref{4.31}) have solutions simultaneously for any fixed $(b,c)$, we need to prove that the element $c_{1}$ determined by $b$ in (\ref{4.30}) and  the element $c_{2}$ determined by $b$ in (\ref{4.31}) are the same number. By easy calculation, we have  $c_{1}=c_{2}$  if and only if $b=1$. And then $c_{1}=c_{2}=1$. Furthermore, for any fixed nonzero square $\overline{a}$,
\begin{equation*}
\begin{split}
&\sum_{(\overline{b},\overline{c}) \in \mathbb{F}_{p^{m}}^{*2}}N_{1(\overline{a},\overline{b},\overline{c})}N_{3(\overline{a},\overline{b},\overline{c})}\\
&=(p^d+1)(p^d-1).
\end{split}
\end{equation*}
  \item when $\overline{a}$ is a non-square, $\overline{b}\neq 0$ and $\overline{c}\neq 0$. In this case, for any fixed $\overline{a}\neq 0$, equation system (\ref{4.28}) has the same number of  solutions as
\begin{equation*}
\begin{cases}
x_{1}^{2}+x_{2}^{2}=-1\\
x_{1}^{d_{1}}+x_{2}^{d_{1}}=-b\\
x_{1}^{d_{2}}+x_{2}^{d_{2}}=-c\\
\end{cases}
\end{equation*}
and  equation system (\ref{4.29}) has the same number of  solutions as
\begin{equation*}
\begin{cases}
x_{3}^{2}-x_{4}^{2}=1\\
x_{3}^{d_{1}}-x_{4}^{d_{1}}=b\\
x_{3}^{d_{2}}-x_{4}^{d_{2}}=c.\\
\end{cases}
\end{equation*}
It can be easily seen that this case is equivalent to the case when $\overline{a}$ is a nonzero square. So  for any fixed non-square $\overline{a}$, we also have
\begin{equation*}
\begin{split}
&\sum_{(\overline{b},\overline{c}) \in \mathbb{F}_{p^{m}}^{*2}}N_{1(\overline{a},\overline{b},\overline{c})}N_{3(\overline{a},\overline{b},\overline{c})}\\
&=(p^d+1)(p^d-1).
\end{split}
\end{equation*}
\end{itemize}
\[\overline{N_{4}}=\sum_{(\overline{a},\overline{b},\overline{c}) \in \mathbb{F}_{p^{m}}^{3}}N_{1(\overline{a},\overline{b},\overline{c})}N_{3(\overline{a},\overline{b},\overline{c})}=(2p^m-1)+(p^m-1)(p^d+1)(p^d-1).\]

 \noindent$\mathbf{Proof}$ $\mathbf{of}$ $\mathbf{Lemma}$ \ref{Le:3.14}.

\noindent
With the notation as above,
 \[\widetilde{N_{4}}=\sum_{(\overline{a},\overline{b},\overline{c}) \in \mathbb{F}_{p^{m}}^{3}}N_{1(\overline{a},\overline{b},\overline{c})}^{2}=1+(p^{m}-1)(p^{d}+1)(2p^{m}-p^{d}+1).\]

\begin{lemma}\label{4.1}
Let $N_{1(b,c)}$ denote the number of solutions $(x_{1}, x_{2}) \in \mathbb{F}_{p^{m}}^{2}$ of (\ref{4.18}), where $(b,c) \in \mathbb{F}_{p^{m}}^{*2}$. Then we have the following conclusions.

\noindent (1). $N_{1(1,1)}=p^d+1$.

\noindent(2). When $(b,c)$ runs through $\mathbb{F}_{p^{m}}^{*2}\setminus\{(1,1)\}$,
\begin{equation*}
N_{1(b,c)}=
\begin{cases}
2(p^d+1),& for~ \frac{p^{m}-p^{d}}{2(p^d+1)}~ times,\\
0, & for~ the ~rest.
\end{cases}
\end{equation*}
\end{lemma}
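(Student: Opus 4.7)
For part (1), I would show that any solution $(x_1,x_2)\in\mathbb{F}_{p^m}^2$ of (\ref{4.18}) with $(b,c)=(1,1)$ must lie in $\mathbb{F}_{p^d}^2$, using the same subfield-descent argument as in the proof of Lemma \ref{Le:3.10}. Once $(x_1,x_2)\in\mathbb{F}_{p^d}^2$, the relation $d\mid k$ forces $x^{p^k}=x^{p^{2k}}=x$ on $\mathbb{F}_{p^d}$, so the three equations collapse to the single conic $x_1^2+x_2^2=1$. Since $q_0\equiv 3\pmod 4$ forces $p\equiv 3\pmod 4$ with $d$ odd, $-1$ is a non-square in $\mathbb{F}_{p^d}$, and Lemma 6.24 of \cite{11} gives $p^d+1$ solutions.

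For part (2), I begin with the global count. The single equation $x_1^2+x_2^2=1$ has $p^m+1$ solutions in $\mathbb{F}_{p^m}^2$ (by the same Lemma 6.24, since $m$ is odd so $-1$ is a non-square in $\mathbb{F}_{p^m}$). Any such solution automatically satisfies $b,c\in\mathbb{F}_{p^m}^*$: if $b=0$ then $(x_1/x_2)^{d_1}=-1$, but $d_1$ is even so $(x_1/x_2)^{d_1}$ is a square, contradicting $-1$ being a non-square. Hence $\sum_{(b,c)\in\mathbb{F}_{p^m}^{*2}}N_{1(b,c)}=p^m+1$, and combined with part~(1), the non-trivial fibres collectively contain $p^m-p^d$ solutions. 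The task thus reduces to showing each non-empty fibre with $(b,c)\ne(1,1)$ has size exactly $2(p^d+1)$; then the frequency $(p^m-p^d)/[2(p^d+1)]$ is immediate.

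To obtain the fibre size, I would exhibit a symmetry group $G$ of order $2(p^d+1)$ acting on the solutions of $x_1^2+x_2^2=1$ and preserving $\phi:(x_1,x_2)\mapsto(b,c)$. Take $G$ generated by (i)~the rotations $(x_1,x_2)\mapsto(u_1x_1-u_2x_2,\,u_2x_1+u_1x_2)$ for pairs $(u_1,u_2)\in\mathbb{F}_{p^d}^2$ with $u_1^2+u_2^2=1$ (a cyclic group of order $p^d+1$, the norm-one subgroup $U_d$ of $\mathbb{F}_{p^{2d}}^*$), and (ii)~the involution $(x_1,x_2)\mapsto(x_1,-x_2)$. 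A direct expansion using $u_i^{p^k}=u_i$ (valid since $u_i\in\mathbb{F}_{p^d}$ and $d\mid k$) together with $u_1^2+u_2^2=1$ gives $(u_1x_1-u_2x_2)^{d_1}+(u_2x_1+u_1x_2)^{d_1}=(u_1^2+u_2^2)(x_1^{d_1}+x_2^{d_1})=x_1^{d_1}+x_2^{d_1}$, and similarly for $d_2$; the involution preserves $b,c$ since $d_1,d_2$ are even. Identifying the conic with the norm-one subgroup $U\subset\mathbb{F}_{p^{2m}}^*$ via $y=x_1+jx_2$ where $j^2=-1$, the $G$-orbit of any $y\in U_d$ is $U_d$ itself (size $p^d+1$, giving exactly the $(1,1)$-fibre), while on $U\setminus U_d$ the action is free (since $(p^m+1)/(p^d+1)$ is odd, $y^{-1}$ never lies in $U_d\cdot y$ unless $y\in U_d$), so orbits there have size exactly $2(p^d+1)$.

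The main obstacle is showing each non-trivial fibre of $\phi$ coincides with a single $G$-orbit rather than a union; equivalently, that $[\mathbb{F}_{p^m}(x_1,x_2):\mathbb{F}_{p^m}(b,c)]=|G|=2(p^d+1)$ on the conic. One can verify this by checking that $\mathbb{F}_{p^m}(b,c)$ is the fixed field of $G$, or by analysing the polynomial relations $b^2(1+t^2)^{d_1}=(1+t^{d_1})^2$ and $c^2(1+t^2)^{d_2}=(1+t^{d_2})^2$ in the variable $t=x_2/x_1$ and intersecting their root sets. Once the fibre equals the orbit, partitioning the $p^m-p^d$ non-trivial solutions into orbits of size $2(p^d+1)$ yields the claimed frequency.
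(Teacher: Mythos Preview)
Your route is genuinely different from the paper's. The paper parametrises the conic $x_1^2+x_2^2=1$ by elements $\theta\in\mathbb{F}_{p^{2m}}^*$ with $\theta^{p^m+1}=1$ via $x_1=(\theta+\theta^{-1})/2$, $x_2=t(\theta-\theta^{-1})/2$ (where $t^2=-1$), substitutes into the second equation to obtain a quadratic $w^2-2bw+1=0$ in $w=\theta^{p^k-1}$, and observes that each root $w$ yields at most $\gcd(p^k-1,p^m+1)=p^d+1$ values of $\theta$. This gives the upper bound $N_{1(b)}\le 2(p^d+1)$ directly, with the degenerate case $w_1=w_1^{-1}$ (forcing $b=1$) accounting for the $(1,1)$-fibre. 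The third equation is then shown to impose no further constraint: once a solution exists, $c$ is uniquely determined by $b$.

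Your group-action argument is correct as far as it goes: the invariance check for $b$ and $c$ under the $\mathbb{F}_{p^d}$-rotations is valid, the parity computation showing $(p^m+1)/(p^d+1)$ is odd is correct, and hence every non-trivial fibre has size \emph{at least} $2(p^d+1)$. But the step you flag as ``the main obstacle'' is genuinely unresolved in your plan. Neither of your two suggested closures is routine: the Galois fixed-field route requires showing that $b$ (or $b,c$ together) generates the full $G$-invariant subfield of the function field of the conic, which is circular without an independent degree bound; and the polynomial-intersection route involves equations of degree $2(p^k+1)$ in $t$, far larger than $2(p^d+1)$, so a crude root count is useless.

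The cleanest patch is precisely the paper's upper bound: for each fixed $b$ the first two equations already have at most $2(p^d+1)$ solutions, by the quadratic-in-$w$ argument above. Combined with your lower bound from the free $G$-action, this forces each non-trivial fibre to be a single $G$-orbit. With that step imported, your argument becomes complete and in fact more illuminating than the paper's, since the orbit structure explains \emph{why} the fibre sizes are uniform rather than merely verifying it.
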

\begin{proof}
We first compute the number $N_{1(b)}$ of solutions $(x_{1}, x_{2}) \in \mathbb{F}_{p^{m}}^{2}$ of the following system of equations
\begin{equation}\label{20}
\begin{cases}
x_{1}^{2}+x_{2}^{2}=1\\
x_{1}^{d_{1}}+x_{2}^{d_{1}}=b.\\
\end{cases}
\end{equation}
When $q_{0}\equiv 3 \pmod 4$, $-1$ is a non-square in $ \mathbb{F}_{p^{m}}$. Then we can
choose $t \in \mathbb{F}_{p^{2m}}$ such that $t^{2}=-1$. From the first equation of (\ref{20}), by setting $\theta =x_{1}-tx_{2} \in \mathbb{F}_{p^{2m}}^{*}$, we have
\begin{equation}\label{21}
x_{1}=\frac{\theta + \theta^{-1}}{2},
x_{2}=\frac{t(\theta - \theta^{-1})}{2}.
\end{equation}
Since $x_{1} \in  \mathbb{F}_{p^{m}}$,  the following holds:
\begin{equation*}
\frac{\theta+\theta^{-1}}{2}=(\frac{\theta+\theta^{-1}}{2})^{p^{m}}=\frac{\theta^{p^{m}}+\theta^{-p^{m}}}{2},
\end{equation*}
which implies $\theta^{p^{m}+1}=1$ or $\theta^{p^{m}-1}=1$. If $\theta^{p^{m}+1}\neq 1$, then $\theta^{p^{m}-1}=1$. In this case,  $\theta \in \mathbb{F}_{p^{m}}^{*}$.  Since $x_{2}=\frac{t(\theta - \theta^{-1})}{2} \in \mathbb{F}_{p^{m}}^{*}$, we have $t \in \mathbb{F}_{p^{m}}^{*}$, which is a contradiction. Hence $\theta^{p^{m}+1}=1$.
\begin{itemize}
  \item When k is even, we have  $p^k+1 \equiv 2 \pmod 4$, then $t^{(p^k+1)}=-1$. Substituting (\ref{21}) into the second equation of (\ref{20}), we obtain
\begin{equation}\label{22}
\theta^{p^{k}-1}+\theta^{1-p^{k}}=2b.
\end{equation}
Denote $\theta^{p^{k}-1}$ by $w$, Eq. (\ref{22}) is equivalent to
\begin{equation}\label{23}
w^{2}-2bw+1=0.
\end{equation}
If Eq.(\ref{23}) has no solution, i.e., $b^{2}-1$ is a non-square of $\mathbb{F}_{p^{2m}}^{*}$, then $N_{b}=0$. Otherwise, let $w_{1}$ and $w_{2}=w_{1}^{-1}$ be two solutions of (\ref{23}). According to the discussion above, we have
      \begin{equation}\label{24}
      \theta^{p^{k}-1}=w_{1}, \theta^{p^{m}+1}=1,
      \end{equation} or
      \begin{equation}\label{25}
      \theta^{p^{k}-1}=w_{1}^{-1}, \theta^{p^{m}+1}=1.
      \end{equation}
      If $\theta_{1}$ and $\theta_{2}$ are two solutions of (\ref{24}), then $(\theta_{1}/\theta_{2})^{p^{k}-1}=1=(\theta_{1}/\theta_{2})^{p^{m}+1}$. Since gcd$(p^{k}-1, p^{m}+1)=p^d+1$, then $(\theta_{1}/\theta_{2})^{p^d+1}=1$. So if (\ref{24}) has solutions, then it has exactly $p^d+1$ solutions.
      \begin{itemize}
        \item If $w_{1}=w_{1}^{-1}$, then (\ref{25}) is the same as (\ref{24}). In this case we have $w_{1}=\pm 1$ and then from Eq.(\ref{23}), $b=\pm 1$. But when $b=-1$, $\theta^{p^{k}-1}=w_{1}=-1$. By $\theta^{p^{m}+1}=1$ and gcd$(2(p^{k}-1), p^{m}+1)=p^d+1$, we have $\theta^{p^d+1}=1$. And then $\theta^{p^{k}-1}=1$, which is a contradiction. So  we only consider $b=1$, which implies $w_{1}=1$. Then ($\ref{24}$) and ($\ref{25}$) both have $p^d+1$ solutions. As a result, we have $p^d+1$ solutions of (\ref{20}).
        \item If $w_{1}\neq w_{1}^{-1}$, then (\ref{25}) has the same number of solutions as $(\ref{24})$. Moreover, their solutions are distinct since $w_{1} \neq \pm1$. Therefore, ($\ref{24}$) and ($\ref{25}$) both have $p^d+1$ solutions or no solutions in $\mathbb{F}_{p^{2m}}$. By (\ref{21}), $(x_{1}, x_{2})$ is uniquely determined by $\theta$. Then (\ref{20}) has $2(p^d+1)$ solutions or no solutions in $\mathbb{F}_{p^{m}}^{2}$.
      \end{itemize}
        Until now, we have  $N_{1(1)}=p^d+1$ and $N_{1(b)}=0$ or $2(p^d+1)$ for $b\neq 1$. And as in Lemma 5.4 in \cite{20},
we define
\begin{equation*}
T=\#\{b \in \mathbb{F}_{p^{m}}: N_{1(b)}=2(p^d+1)\}.
\end{equation*}
Then we have
  \begin{equation*}
T=\frac{p^{m}-p^d}{2(p^d+1)}.
\end{equation*}

Substituting (\ref{21}) into the third equation of (\ref{4.18}), we obtain
\begin{equation}\label{27}
\theta^{p^{2k}-1}+\theta^{1-p^{2k}}=2c,
\end{equation}
which implies $c=\frac{1}{2}\{{(b+\sqrt{b^2-1})^{p^k+1}+(b-\sqrt{b^2-1})^{p^k+1}}\}$. Hence if (\ref{4.18}) has solutions, then $N_{1(b,c)}=N_{1(b)}$ and $c$ is uniquely determined by $b$.
  \item When k is odd, we have $p^k+1 \equiv 0 \pmod 4$, then $t^{(p^k+1)}=1$. Similarly, if (\ref{4.18}) has solutions, then we have $N_{1(b,c)}=N_{1(b)}$ and $c$ is uniquely determined by $b$.
\end{itemize}
Summarizing all the cases above completes the proof.
\end{proof}
\begin{lemma}\label{4.2}
Let $N_{2(b,c)}$ denote the number of solutions $(x_{1}, x_{2}) \in \mathbb{F}_{p^{m}}^{2}$ of (\ref{4.19}), where $(b,c) \in \mathbb{F}_{p^{m}}^{*2}$. Then we have the following conclusions.

\noindent (1). $N_{2(1,1)}=p^d+1$.

\noindent(2). When $(b,c)$ runs through $\mathbb{F}_{p^{m}}^{*2}\setminus\{(1,1)\}$,
\begin{equation*}
N_{2(b,c)}=
\begin{cases}
2(p^d+1),& for~ \frac{p^{m}-p^d}{2(p^d+1)}~ times,\\
0, & for~ the ~rest.
\end{cases}
\end{equation*}
\end{lemma}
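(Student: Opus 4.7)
The plan is to parallel the proof of Lemma \ref{4.1}, modifying the parametrization to reflect the sign change on the right-hand side of the first equation. Since $-1$ is a non-square in $\mathbb{F}_{p^m}$ when $q_0 \equiv 3 \pmod{4}$, fix $t \in \mathbb{F}_{p^{2m}}$ with $t^2 = -1$ and set $\theta = x_1 - tx_2 \in \mathbb{F}_{p^{2m}}^{\ast}$; the first equation of (\ref{4.19}) then becomes the twisted norm condition $\theta^{p^m+1} = -1$ (in place of $\theta^{p^m+1} = 1$ in Lemma \ref{4.1}), with $x_1 = (\theta - \theta^{-1})/2$ and $x_2 = -(\theta + \theta^{-1})/(2t)$. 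By surjectivity of the norm map, there are exactly $p^m+1$ such $\theta$, each corresponding to a unique solution of the first equation.

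Substituting into the second equation $x_1^{p^k+1} + x_2^{p^k+1} = -b$ splits by the parity of $k$. For $k$ even, $t^{p^k+1} = -1$ combines with the sign on the right-hand side to recover exactly the quadratic (\ref{23}): $w^2 - 2bw + 1 = 0$ with $w = \theta^{p^k-1}$. For $k$ odd, $t^{p^k+1} = 1$ yields instead $w^2 + 2bw + 1 = 0$ with $w = \theta^{p^k+1}$. In both cases the counting argument of Lemma \ref{4.1} carries over, using $\gcd(p^k-1, p^m+1) = p^d+1$ (respectively $\gcd(p^k+1, p^m+1) = p^d+1$), valid under the parity conditions on $k/d$ and $m/d$, to show that each admissible root contributes $p^d+1$ values of $\theta$ compatible with $\theta^{p^m+1} = -1$.

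The crux is the double-root analysis $w_1 = \pm 1$, where exactly one of the two candidates is forced out by the twisted norm. For $k$ even, $w_1 = -1$ would place $\theta$ in $\mathbb{F}_{p^{2d}}^{\ast}$ where $p^k \equiv 1 \pmod{p^{2d}-1}$ and hence $\theta^{p^k-1} = 1$, a contradiction. For $k$ odd, $w_1 = 1$ would confine $\theta$ to the $(p^d+1)$-th roots of unity, on which $\theta^{p^m+1}$ reduces to $1$ (using $p^m \equiv -1 \pmod{p^d+1}$), again contradicting $\theta^{p^m+1} = -1$. The surviving root yields $b = 1$ in both parities; substituting into the third equation (using $p^{2k}-1 = (p^k-1)(p^k+1)$ to express $\theta^{p^{2k}-1}$ as $w^{p^k+1}$ or $w^{p^k-1}$) uniquely determines $c = 1$. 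Counting $\theta \in \mathbb{F}_{p^{2d}}^{\ast}$ with $\theta^{p^d+1} = -1$ via the (surjective) norm map gives $p^d+1$ such $\theta$, hence $N_{2(1,1)} = p^d+1$. For $(b,c) \neq (1,1)$, the two roots $w_1, w_1^{-1}$ of the quadratic are distinct, each contributing $p^d+1$ or $0$, so $N_{2(b,c)} \in \{0, 2(p^d+1)\}$. The frequency $\frac{p^m-p^d}{2(p^d+1)}$ follows by summing: the total count over all $(b,c)$ is $p^m+1$, less the $p^d+1$ degenerate solutions, divided by $2(p^d+1)$.

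The main obstacle is the sign bookkeeping in the parametrization — verifying that in both parities of $k$ the excluded double root $w_1 = \pm 1$ is indeed the one inconsistent with $\theta^{p^m+1} = -1$, so that the distinguished pair stays $(b,c) = (1,1)$ as in Lemma \ref{4.1} and does not drift to, say, $(-1,-1)$ or $(-1,1)$. Once this is handled, the remainder of the argument is a direct transcription of the Lemma \ref{4.1} proof with the twisted norm condition in place of the ordinary norm.
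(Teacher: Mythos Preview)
Your proposal is correct and follows essentially the same approach the paper intends: the paper's own proof of this lemma is simply ``The proof is similar to the proof of the lemma above,'' and your argument is precisely the natural adaptation of Lemma~\ref{4.1}, replacing the norm condition $\theta^{p^m+1}=1$ by the twisted condition $\theta^{p^m+1}=-1$ coming from $x_1^2+x_2^2=-1$. Your sign bookkeeping and parity split on $k$ are accurate, and the double-root exclusion and frequency count via $\sum_{(b,c)} N_{2(b,c)} = p^m+1$ are the right way to close the argument.
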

\begin{proof}
The proof is similar to the proof of the lemma above.
\end{proof}
\begin{lemma}\label{4.3}

Let $N_{3(b,c)}$ denote the number of solutions $(x_{1}, x_{2}) \in \mathbb{F}_{p^{m}}^{2}$ of (\ref{4.31}), where $(b,c) \in \mathbb{F}_{p^{m}}^{*2}$. Then we have the following conclusions.

\noindent (1). $N_{3(1,1)}=p^d-1$.

\noindent(2). When $(b,c)$ runs through $\mathbb{F}_{p^{m}}^{*2}\setminus\{(1,1)\}$,
\begin{equation*}
N_{3(b,c)}=
\begin{cases}
2(p^d-1),& for~ \frac{p^{m}-p^{d}}{2(p^d-1)}~ times,\\
0, & for~ the ~rest.
\end{cases}
\end{equation*}
\end{lemma}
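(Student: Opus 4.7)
The plan is to mirror the argument of Lemma~\ref{4.1}, substituting the hyperbolic quadric $x_3^2-x_4^2=-1$ for the elliptic quadric $x_1^2+x_2^2=1$. The crucial simplification is that $(x_3-x_4)(x_3+x_4)=-1$ already factors over $\mathbb{F}_{p^m}$, so there is no need to pass to the quadratic extension $\mathbb{F}_{p^{2m}}$ as was done for Lemma~\ref{4.1}. Set $\theta=x_3-x_4$; then $\theta\in\mathbb{F}_{p^m}^{\ast}$, $x_3+x_4=-\theta^{-1}$, and
\[
x_3=\frac{\theta-\theta^{-1}}{2},\qquad x_4=-\frac{\theta+\theta^{-1}}{2},
\]
giving a bijection between solutions of the first equation and elements of $\mathbb{F}_{p^m}^{\ast}$.

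Next I plug this parametrization into $x_3^{d_1}-x_4^{d_1}=-b$. Expanding $(\theta-\theta^{-1})^{p^k+1}-(-1)^{p^k+1}(\theta+\theta^{-1})^{p^k+1}$ using $d_1=p^k+1$ even, the $\theta^{d_1}+\theta^{-d_1}$ terms cancel and the equation reduces, after absorbing a factor of $2^{p^k}$ into $b$, to $\theta^{p^k-1}+\theta^{1-p^k}=2b$. Setting $w=\theta^{p^k-1}$, this becomes the quadratic $w^2-2bw+1=0$, exactly as in Lemma~\ref{4.1}. The qualitative analysis of its roots $w_1,w_1^{-1}$ is identical: either $b^2-1$ is a nonsquare and there are no solutions, or we must count $\theta\in\mathbb{F}_{p^m}^{\ast}$ with $\theta^{p^k-1}=w_1$ (and similarly for $w_1^{-1}$).

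The one place where the proof genuinely diverges from Lemma~\ref{4.1} is the counting at this step: since $\theta\in\mathbb{F}_{p^m}^{\ast}$ rather than on the norm-one subgroup of $\mathbb{F}_{p^{2m}}^{\ast}$, the controlling gcd is $\gcd(p^k-1,p^m-1)=p^d-1$ instead of $\gcd(p^k-1,p^m+1)=p^d+1$. Hence each solvable fiber $\theta^{p^k-1}=w_i$ contributes exactly $p^d-1$ values of $\theta$, and the same case analysis ($b=1$ forces $w_1=1$ and the two fibers coincide; $b=-1$ is excluded by the same $\gcd(2(p^k-1),p^m-1)=p^d-1$ argument; $b\neq\pm1$ gives two disjoint fibers) yields $N_{3(1)}=p^d-1$, while $N_{3(b)}\in\{0,2(p^d-1)\}$ for $b\neq 1$. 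Then, as in Lemma~\ref{4.1}, substituting the parametrization into the third equation expresses $c$ uniquely in terms of $b$, so $N_{3(b,c)}=N_{3(b)}$ when $(b,c)$ is the compatible pair and zero otherwise.

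Finally, to determine $T=\#\{b\in\mathbb{F}_{p^m}:N_{3(b)}=2(p^d-1)\}$, I sum $N_{3(b)}$ over all $b\in\mathbb{F}_{p^m}$ and equate the total to the number of pairs $(x_3,x_4)$ satisfying $x_3^2-x_4^2=-1$, which is $p^m-1$ by the factorization above. This yields $(p^d-1)+T\cdot 2(p^d-1)=p^m-1$, hence $T=(p^m-p^d)/(2(p^d-1))$, as claimed. The main obstacle is the bookkeeping around the exceptional value $b=-1$: one must verify that the system $\theta^{p^k-1}=-1$ together with $\theta^{p^m-1}=1$ forces $\theta^{p^d-1}=1$ (using oddness of $d/t$ and hence the 2-adic valuation of the exponents), which then contradicts $\theta^{p^k-1}=-1\ne 1$. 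Once this is checked, the rest of the argument is a routine adaptation of Lemma~\ref{4.1}.
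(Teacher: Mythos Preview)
Your approach is correct and is precisely the adaptation of Lemma~\ref{4.1} that the paper has in mind (the paper's own proof is just ``similar to the proof of Lemma~\ref{4.1}''). Two small corrections: first, there is no ``absorbing a factor of $2^{p^k}$'' --- since $2\in\mathbb{F}_p$ one has $2^{p^k}=2$, so the reduction to $\theta^{p^k-1}+\theta^{1-p^k}=2b$ is exact; second, the exclusion of $b=-1$ uses that $(p^m-1)/(p^d-1)=1+p^d+\cdots+p^{(s-1)d}$ is odd, which follows from the hypothesis that $s=m/d$ is odd, not from $d/t$ being odd.
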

\begin{proof}
The proof is similar to the proof of the Lemma \ref{4.1}. 
\end{proof}

\noindent $\mathbf{Acknowledgement}$:  The authors are very grateful to the Editor in Chief, the Coordinating Editor and the anonymous reviewers, for their helpful comments that improved the quality of this paper.


\begin{thebibliography}{40}
\bibitem{1}
{L.D. Baumert, R.J. McEliece}, ¡°Weights of irreducible cyclic codes¡±,
{\it Inf. Contr.}, {\bf 20, no. 2}~(1972), 158-175.
\bibitem{2}
{L.D. Baumert, J. Mykkeltveit}, ¡°Weight distribution of some irreducible cyclic codes¡±,
{\it DSN Progr. Rep.},  {\bf 16}~(1973),  128-131.

\bibitem{3}
{A.R. Calderbank, J.M. Goethals}, ¡°Three-weight codes and association schemes¡±,
{\it Philips J. Res.},  {\bf 39}~(1984), 143-152.
\bibitem{4}
{C. Carlet, C. Ding, J. Yuan}, ¡°Linear codes from highly nonlinear functions and their secret sharing schemes¡±,
{\it IEEE Trans. Inf. Theory},  {\bf 51, no.6}~(2005), 2089-2102.
\bibitem{6}
{C. Ding, T. Helleseth}, ¡°Optimal ternary cyclic codes from monomials¡±,
{\it IEEE Trans. Inf. Theory},  {\bf 313, no. 4}~(2013), 5898-5904.
\bibitem{7}
{C. Ding, Y. Liu, C. Ma, L. Zeng}, ¡°The weight distributions of the duals of cyclic codes with two zeros¡±,
{\it IEEE Trans. Inf. Theory},  {\bf 57, no. 12}~(2011), 8000-8006.
\bibitem{9}
{K. Feng, J. Luo}, ¡°Weight distribution of some reducible cyclic codes¡±,
{\it Finite Fields Appl.},  {\bf 14, no. 2}~(2008), 390-409.
\bibitem{18}
{T. Feng, K. Leung, Q. Xiang}, ¡°Binary cyclic codes with two primitive nonzeros¡±,
{\it Sci. China Math.},  {\bf 56, no. 7}~(2012), 1403-1412.
\bibitem{28}
{H.D.L. Hollmann, Q. Xiang}, ¡°On binary codes with few weights¡±,
{\it Finite Fields Appl.},  {\bf 11, no. 1}~(2005), 89-110.
\bibitem{29}
{C. Li, X. Zeng, L. Hu}, ¡°A class of binary cyclic codes with five weights¡±,
{\it Sci. China Math.},  {\bf 53, no. 12}~(2010), 3279-3286.
\bibitem{11}
{R. Lidl, H. Niederreiter}, ¡°Finite fieds¡±,
{\it Addison-Wdsley Publishing Inc.},  (1983).
\bibitem{16}
{Y. Liu, H. Yan, C. Liu}, ¡°A class of six-weight cyclic codes and their weight distribution¡±,
{\it Des. Codes Crypogr.},  (2014), doi: 10.1007/s10623-014-9984-y.
\bibitem{12}
{J. Luo, K. Feng}, ¡°On the weight distribution of two classes of  cyclic codes¡±,
{\it IEEE Trans. Inf. Theory},  {\bf 54, no. 12}~(2008), 5332-5344.
\bibitem{13}
{C. Ma, L. Zeng, Y. Liu, D. Feng, C. Ding}, ¡°The weight enumerator of a class of cyclic codes¡±,
{\it IEEE Trans. Inf. Theory},  {\bf 57, no. 1}~(2011), 397-402.

\bibitem{14}
{J. Yuan, C. Carlet, C. Ding}, ¡°The weight distribution of a class of linear codes from perfect nonlinear functions¡±,
{\it IEEE Trans. Inf. Theory},  {\bf 52, no. 2}~(2006), 712-717.
\bibitem{15}
{B. Wang, C. Tang, Y. Qi, Y. Yang, M. Xu}, ¡°The weight distributions of cyclic codes¡±,
{\it IEEE Trans. Inf. Theory},  {\bf 40, no. 6}~(1994), 2068-2071.
\bibitem{30}
{J. Wolfmann}, ¡°Weight distributions of some binary primitive cyclic codes and elliptic curves¡±,
{\it IEEE Trans. Inf. Theory},  {\bf 58, no. 12}~(2012), 7253-7259.
\bibitem{21}
{M. Xiong}, ¡°The weight distributions of a class of cyclic codes¡±,
{\it Finite Fields Appl.},  {\bf 18, no. 5}~(2012), 933-945.
\bibitem{22}
{D. Zheng, X. Wang, X. Zeng, L. Hu}, ¡°The weight distribution of a famimily of $p$-ary cyclic codes¡±,
{\it Des. Codes Crypogr.},  (2013), doi: 10.1007/s10623-013-9908-2.

\bibitem{20}
{Z. Zhou, C. Ding, J. Luo, A. Zhang}, ¡°A family of five-weight cyclic codes and their weight enumerators¡±,
{\it IEEE Trans. Inf. Theory},  {\bf 59, no. 10}~(2013), 6674-6682.











\end{thebibliography}
\end{document}